\documentclass[journal,twocolumn,10pt]{IEEEtranTCOM}
\usepackage{color}
\usepackage{amsfonts,amsmath,amssymb,amsthm,mathrsfs,bm}
\usepackage{latexsym,amscd}
\usepackage{amsbsy}
\usepackage{graphicx,multirow,bm}
\graphicspath{{20260511/}{./}}
\usepackage{algorithm}
\usepackage{algorithmicx}
\usepackage{algpseudocode}
\usepackage{xcolor}
\usepackage{tikz}
\usepackage{diagbox}
\usepackage{multicol}
\usepackage{arydshln}
\usepackage{booktabs}
\usepackage{times}
\usepackage{cuted}
\usepackage{cases}
\usepackage{subfigure}
\usepackage{url}
\usepackage[noadjust]{cite}
\def\BibTeX{{\rm B\kern-.05em{\sc i\kern-.025em b}\kern-.08em
    T\kern-.1667em\lower.7ex\hbox{E}\kern-.125emX}}
\usetikzlibrary{arrows.meta,positioning,calc,shapes.geometric}

\definecolor{bluebox}{RGB}{35,100,200}
\definecolor{orangebox}{RGB}{230,120,0}
\definecolor{greenbox}{RGB}{70,140,60}
\definecolor{redbox}{RGB}{220,80,35}
\definecolor{readblue}{RGB}{95,150,220}
\definecolor{newgreen}{RGB}{130,210,100}
\definecolor{stripebg}{RGB}{238,238,238}
\definecolor{unchangedgray}{RGB}{145,145,145}
\definecolor{darkgray}{RGB}{50,50,50}

\tikzset{
    panel/.style={
        rounded corners=3pt,
        line width=0.8pt,
        fill=#1!4
    },
    paneltitle/.style={
    font=\bfseries\scriptsize,
        align=center
    },
    bodytext/.style={
        font=\tiny,
        align=center
    },
    tinybody/.style={
    font=\fontsize{5}{6}\selectfont,
        align=center
    },
    flowarrow/.style={
        -{Triangle[length=4mm,width=4mm]},
        line width=2.2pt,
        draw=gray!70
    },
    smallarrow/.style={
        -{Triangle[length=3mm,width=3mm]},
        line width=1.1pt,
        draw=bluebox
    },
    dashbox/.style={
        rounded corners=3pt,
        dashed,
        line width=0.7pt,
        draw=gray!80,
        fill=stripebg
    },
    stripebox/.style={
    rounded corners=3pt,
    dashed,
    line width=0.7pt,
    draw=gray!75,
    fill=stripebg
},
    symbol/.style={
    rectangle,
    minimum width=0.31cm,
    minimum height=0.31cm,
    inner sep=0pt,
    draw=darkgray,
    line width=0.40pt
},
read/.style={
    symbol,
    fill=readblue
},
unchanged/.style={
    symbol,
    fill=unchangedgray
},
newsym/.style={
    symbol,
    fill=newgreen
},
midarrow/.style={
    -{Triangle[length=3.2mm,width=3.2mm]},
    line width=1.6pt,
    draw=gray!70
},
cyl/.style={
    cylinder,
    shape border rotate=90,
    aspect=0.35,
    minimum width=0.18cm,
    minimum height=0.30cm,
    inner sep=2.8pt,
    outer sep=0pt,
    draw=darkgray,
    line width=0.35pt,
    fill=#1
}
}

\newtheorem{thm}{Theorem}
\newtheorem{lem}{Lemma}

\newtheorem{defn}{Definition}
\newtheorem{ex}{Example}
\newtheorem{remark}{Remark}

\newtheorem{cons}{Construction}

\newcommand{\fq}{{\mathbb F}_{q}}

\renewcommand\thetable{\Roman{table}}

\usepackage{pgf}
\usepackage{tikz}
\usetikzlibrary{matrix,chains,positioning,arrows,calc,decorations,plotmarks,patterns,trees,fit,backgrounds,shapes.multipart,topaths}
\usepackage{pgfplots}
\usetikzlibrary{external}                       
\usepackage{scalefnt}
\usepackage{tkz-graph}
\pgfplotsset{compat=1.3}
\tikzstyle{help lines}=[black!20,dashed]

\pgfplotscreateplotcyclelist{mylist}{red,blue,black,yellow,brown}

\pgfdeclarepatternformonly{my north east lines}{\pgfqpoint{-1pt}{-1pt}}{\pgfqpoint{8pt}{8pt}}{\pgfqpoint{6pt}{6pt}}%
{
	\pgfsetlinewidth{0.4pt}
	\pgfpathmoveto{\pgfqpoint{0pt}{0pt}}
	\pgfpathlineto{\pgfqpoint{6.1pt}{6.1pt}}
	\pgfusepath{stroke}
}
\definecolor{light_gray}{rgb}{0.6,0.6,0.6}
\definecolor{awgray}{rgb}{0.7,0.7,0.7}
\definecolor{awgray_dark}{rgb} {0.4,0.4,0.4}

\tikzset{
	>=stealth',
	mycircle/.style={circle, draw=gray, very thick},
	mycircle_small/.style={circle,draw=awgray_dark,fill = awgray_dark, inner sep=0,minimum size=.6em},
	mycircle_small_black/.style={circle,draw=black,fill = black, inner sep=0,minimum size=.6em},
	mybox/.style={rectangle,rounded corners,draw=black, thick,text width=1em,minimum height=4em,minimum width=5em,text centered},
	mybox_big/.style={rectangle,rounded corners,draw=black, thick,text width=17.5em,minimum height=12em,minimum width=17.5em,text centered},
	mybox_vec/.style={rectangle,rounded corners,draw=black, thick,text width=4em,minimum height=0.7em, minimum width=4em,text centered},
	mybox_vec_short/.style={rectangle,rounded corners,draw=black, thick,text width=1em,minimum height=0.7em, minimum width=2em,text centered},
	pil/.style={->, thick, shorten <=2pt, shorten >=2pt,},
}

\usetikzlibrary{arrows,shapes,chains}
\renewcommand\baselinestretch{1}

\begin{document}
	
	\title{Convertible Codes: MSR-to-MSR Conversion with Optimal Access and Bandwidth}
	\author{Yumeng Yang,~\IEEEmembership{Student Member,~IEEE}, Han Cai,~\IEEEmembership{Member,~IEEE},  Xianfu Lei, \IEEEmembership{Member,~IEEE}, and Xiaohu Tang, \IEEEmembership{Fellow, IEEE} 
		
    \thanks{
			Y. Yang, H. Cai, X. Lei and X. Tang are with the Information Coding and Transmission Key Laboratory of Sichuan Province, Southwest Jiaotong University, Chengdu 610032, China (email: yangyumeng@my.swjtu.edu.cn, hancai@swjtu.edu.cn, xflei@swjtu.edu.cn, xhutang@swjtu.edu.cn).
}
}

	\maketitle
\vspace{-2cm}
\begin{abstract}

In this paper, we study convertible codes in the merge regime and focus on the minimum storage regenerating (MSR) setting, 
where both the initial codes and the final code admit optimal single-node repair. We propose explicit MSR-to-MSR conversion 
schemes and analyze their performance in terms of access cost and conversion bandwidth.

We first construct convertible MSR codes in the irregular setting, where the $m$ initial codes may have different parameters, achieving optimal access cost. We further consider the practically important same-code setting, where all initial codewords are drawn from the same MSR code. By introducing a row-matching technique, we obtain constructions simultaneously achieving optimal access cost and conversion 
bandwidth in most parameter regimes.

\end{abstract}

	\begin{IEEEkeywords}
Convertible codes, MSR codes, distributed storage, merge regime, conversion bandwidth, access cost.
\end{IEEEkeywords}

	\section{Introduction}

Distributed storage systems provide reliability by encoding data 
across multiple storage nodes using erasure codes. Two important 
and practically motivated aspects of such systems have been 
extensively studied: efficient node repair to handle individual 
node failures, and efficient code conversion to adapt to changing 
system requirements.

The problem of node repair is addressed by regenerating codes \cite{DGW+10}, which minimize the repair bandwidth by 
allowing each helper node to transmit a fraction of its stored 
data. Among these, Minimum Storage Regenerating (MSR) codes are 
of particular interest because they achieve the minimum possible 
storage overhead and optimal repair bandwidth. Explicit 
constructions of MSR codes have been proposed 
in~\cite{LY,LJ,LTP15,ZLH25,LWH24,WZLT23,ZZ23,HLPS+19,ChenBarg20,WC26,Vajha23,Ye2017Explicit,Ye2020,tian2013,Zorgur2019,tamo2019}.

Separately, as storage requirements evolve---due to data growth, 
node addition, or changes in fault-tolerance requirements---it 
becomes necessary to adapt the storage configuration without 
re-encoding from scratch. This problem is addressed by convertible 
codes~\cite{maturana2022Convertible}, 
which enable the transformation from an initial code to a final 
code with different parameters while accessing only a subset of stored symbols. The performance of such schemes is measured by 
the \emph{access cost}, i.e., the total number of symbols read 
during conversion, and the \emph{conversion bandwidth}, i.e., 
the total amount of data downloaded across all nodes.
Information-theoretic lower bounds for 
these metrics have been established for scalar MDS codes 
in~\cite{ge2024MDS} and \cite{maturana2023Bandwidth}. 
The most well-studied setting is the \emph{merge} 
regime~\cite{chopra2024Low,ge2024MDS,ge2025Locally,Kong2024LRC,maturana2022Convertible,shi2025Bounds}, where 
$m$ initial codewords are combined into a single final codeword 
of larger dimension, which is the setting we consider throughout 
this paper.

Although MSR codes achieve optimal repair bandwidth and convertible codes
enable efficient transformation between different parameters,
their integration remains challenging. Existing convertible-code
constructions based on MDS codes  do not preserve the repair
structure, while MSR codes are designed for static settings and
do not support efficient conversion. The main difficulty lies in a structural mismatch:  code conversion
reorganizes stripe layout and symbol placement, whereas MSR
repair relies on a coordinate-wise structure enabling interference
alignment. As a result, this structure is not preserved under
conversion, making it difficult to achieve both efficient
conversion and optimal repair simultaneously.

This motivates the study of \emph{convertible MSR codes}, which 
support both efficient repair and efficient conversion. The main 
contributions are as follows.

First, we extend the scalar MDS conversion framework of~\cite{ge2024MDS} to array codes, providing a general foundation for our convertible MSR constructions.

Then, we construct explicit $(m,1)_q$ convertible MSR codes in the irregular setting that achieve optimal access cost, based on this framework and the Hadamard-design MSR construction~\cite{Ye2017Explicit}. The key idea is a subsymbol-level pre-alignment that preserves the periodic structure required for MSR repair under conversion.

Finally, we further consider the same-code setting, where all initial codewords are drawn from the same MSR code. By introducing a row-matching technique based on bijective row remapping, we obtain constructions simultaneously achieving optimal access cost and conversion bandwidth for $r_F\leq\min\{k_I,r_I\}$ and $r_F>k_I$. For the remaining regime $r_I<r_F<k_I$, a new construction combining the conversion procedure with regenerating techniques achieves optimal conversion bandwidth.
 As a comparison, Table~\ref{tab: literature_comparison} presents 
parameters of previous results and ours. 
As shown in the table, existing constructions achieve 
at most two of the three desirable properties---optimal 
access cost, optimal conversion bandwidth, and optimal 
repair bandwidth---simultaneously. In contrast, the 
constructions proposed in this paper are the first to 
achieve all three properties at once, across different 
parameter regimes.

 This article substantially extends our preliminary conference paper~\cite{YangISIT2026} in several important respects, including more general parameter regimes, a more general construction framework, and strengthened optimality analysis.

\begin{table*}[t]
\caption{Comparison of representative prior $\mathrm{MDS}\to\mathrm{MDS}$ results in the merge regime. The conversion-bandwidth column is considered only in the same-code setting.}
\label{tab: literature_comparison}
\centering
\vspace{-0.3cm}
\scriptsize
\setlength{\tabcolsep}{3pt}
\renewcommand{\arraystretch}{1}
\begin{tabular}{p{2.7cm}p{2.1cm}p{4.7cm}p{1.35cm}p{1.5cm}p{1.2cm}}
\toprule
Construction & Setting & Parameters & Opt. access & Opt. bw. & Opt. repair \\
\midrule
General MDS~\cite{maturana2022Convertible}
& same-code
& $r_F\leq\min\{k_I,r_I\}$
& $\checkmark$
& $\times$
& $\times$ \\

Hankel-I~\cite{maturana2022Convertible}
& same-code
& $r_F\leq\lfloor r_I/m\rfloor$
& $\checkmark$
& $\times$
& $\times$ \\

Hankel-II~\cite{maturana2022Convertible}
& same-code
& $r_F\leq r_I-m+1$
& $\checkmark$
& $\times$
& $\times$ \\

Hankel$_s$~\cite{maturana2022Convertible}
& same-code
& for $m\leq s\leq r_I$,\; $r_F\leq (s-m+1)\lfloor r_I/s\rfloor+\max\{(r_I\bmod s)-m+1,0\}$
& $\checkmark$
& $\times$
& $\times$ \\

GRS~\cite{Kong2024LRC}
& same-code
& $r_F\leq\min\{k_I,r_I\}$
& $\checkmark$
& $\times$
& $\times$ \\

Extended GRS~\cite{ge2024MDS}
& irregular / same-code
& ---
& $\checkmark$
& $\times$
& $\times$ \\

Bandwidth-optimal MDS~\cite{maturana2023Bandwidth}
& same-code
& $r_F\leq\min\{k_I,r_I\}$
& $\checkmark$
& $\checkmark$
& $\times$ \\

Bandwidth-optimal MDS~\cite{maturana2023Bandwidth}
& same-code
& $r_I<r_F\leq k_I$
& $\times$
& $\checkmark$
& $\times$ \\

MDS~\cite{shi2025Bounds}
& irregular / same-code
& $m\leq r_F$
& $\checkmark$
& $\times$
& $\times$ \\

Construction~\ref{cons: array_rI>r_F} 
& irregular / same-code
& $r_F\leq\min\{k_{I_i},r_{I_i}\}$ for all $i\in[m]$ / $r_F\leq\min\{k_{I},r_{I}\}$
& $\checkmark$
& $\checkmark$
& $\checkmark$ \\

{Construction\,1A} (Remark~\ref{rmk: construction2})
& irregular / same-code
& $r_F>\min\{k_{I_i},r_{I_i}\}$ for all $i\in[m]$ / $r_F>k_I$
& $\checkmark$
& $\checkmark$
& $\checkmark$ \\

Construction~\ref{cons: array_bw} 
&  same-code
& $r_{I}<r_F<k_{I}$
& $\times$
& $\checkmark$
& $\checkmark$ \\

\bottomrule
\end{tabular}
\end{table*}

The remainder of the paper is organized as follows. 
Section~\ref{sec: pre} introduces notation and preliminaries. Section~\ref{sec:model} presents the system model and formal
definitions of convertible MSR codes.
Section~\ref{sec: framework} presents the array MDS conversion 
framework. Section~\ref{sec: array} presents the irregular and 
same-code constructions.
Section~\ref{sec: conclusion} concludes 
the paper.

\section{Preliminaries}\label{sec: pre}
An $(n,k;\alpha)_q$ array code $\mathcal{C}$ is a subspace of $\mathbb{F}_q^{n\alpha}$ with dimension $k\alpha$. For a given codeword $\mathbf{c}= (\mathbf{c}_0,\dots,\mathbf{c}_{n-1})\in\mathcal{C}$, denote by $\mathbf{c}_i=(c_{0,i},\dots,c_{\alpha-1,i})$ for $i\in[0,n-1]$ the $i$-th {\it symbol} of $\mathbf{c}$. Moreover, each {\it node} stores one vector symbol of length $\alpha$, and we call each component $c_{w,i}$ in this vector a {\it subsymbol} for $i\in[0,n-1]$ and $w\in[0,\alpha-1]$.

We use generalized Reed-Solomon codes as a base code to design convertible codes. The definition is given below.
\begin{defn}[Generalized Reed-Solomon code]\label{Def_RS}
Let $\Lambda=\{\lambda_{0},\cdots,\lambda_{n-1}\}\subseteq \mathbb{F}_q$ be the set of distinct elements. A generalized Reed-Solomon code of length $n$ and dimension $k$ with evaluation points $\Lambda$ is defined as
\begin{equation*}
\{(\nu_{0}f(\lambda_{0}),\cdots,\nu_{n-1}f(\lambda_{n-1})):f\in\mathbb{F}_q[x],{\rm deg}(f)<k\},
\end{equation*}
where $\boldsymbol{\nu}=(\nu_{0},\cdots,\nu_{n-1})\in(\mathbb{F}_q^{*})^n.$
\end{defn}

In particular, when $\boldsymbol{\nu}=(1,\ldots,1)$, the resulting code is the classical Reed--Solomon (RS) code. 
The dual of an RS code is also a generalized Reed--Solomon code. 
Specifically, for an $[n,k]$ RS code with evaluation points $\Lambda=\{\lambda_{0},\ldots,\lambda_{n-1}\}\subseteq \mathbb{F}_q$, the coefficients of its dual code are given by
\begin{equation}\label{eq: coefficient}
\nu_i = \prod_{\substack{j\in[0,n-1]\\ j\neq i}}(\lambda_i-\lambda_j)^{-1}, \qquad i\in[0,n-1].
\end{equation}

Clearly, the coefficients $\boldsymbol{\nu}$ are uniquely determined by the evaluation points.

We next recall the notion of puncturing.
\begin{defn}[Punctured code]
Let $\mathcal{C}$ be an $(n,k;\alpha)_q$ array code and let $\mathcal{P}\subseteq[0,n-1]$ be an index set with $|\mathcal{P}|=s$.
The \emph{punctured code} of $\mathcal{C}$ on $\mathcal{P}$ is
\[
\mathcal{C}|_{\mathcal{P}}:=\{\mathbf{c}|_{\mathcal{P}}:\mathbf{c}\in\mathcal{C}\}\subseteq \mathbb{F}_q^{s\alpha},
\]
where $\mathbf{c}|_{\mathcal{P}}:=(\mathbf{c}_i: i\in\mathcal{P})$ is obtained by keeping only the coordinates indexed by $\mathcal{P}$.
\end{defn}

Thus, for an $[n,k]$ GRS codeword $\mathbf{c}=(c_0,\cdots,c_{n-1})=(\nu_{0}f(\lambda_{0}),\cdots,\nu_{n-1}f(\lambda_{n-1}))$ with evaluation points $\Lambda=\{\lambda_{0},\cdots,\lambda_{n-1}\}$, a parity-check matrix can be taken as the $(n-k)\times n$ Vandermonde matrix $H=\mathrm{Vand}_{n-k}(\lambda_{0},\cdots,\lambda_{n-1})$, i.e.,
\[(H)_{t,i}=\lambda_i^t,\qquad t\in[0,n-k-1],\ i\in[0,n-1],\]
which satisfies
$H\cdot (c_0,\cdots,c_{n-1})^{\top}=0.$

In the sequel, when referring to puncturing $\mathbf{c}$, we adopt
a convenient normalization that preserves the Vandermonde
structure of the parity-check matrix.

Specifically, let $\mathcal{P}=\{0,\cdots,n'-1\}$ with $k<n'<n$,
and define
\[
H|_{\mathcal{P}} := \mathrm{Vand}_{r'}(\lambda_0,\cdots,\lambda_{n'-1}),
\quad r'=n'-k,
\]
by restricting the evaluation points to $\mathcal{P}$.

To ensure consistency with this reduced parity-check matrix,
the codeword symbols are rescaled according to the updated
GRS coefficients
\[
\nu'_i := \prod_{\substack{j\in[0,n'-1]\\ j\neq i}}
(\lambda_i-\lambda_j)^{-1}, \quad i\in[0,n'-1].
\]
Then the rescaled vector
$
\big(\nu'_0\nu_0^{-1}c_0,\cdots,\nu'_{n'-1}\nu_{n'-1}^{-1}c_{n'-1}\big)
$
satisfies the parity-check equations defined by $H|_{\mathcal{P}}$.

Therefore, without loss of generality, we assume this scaling
is applied locally by each node and treat it implicitly, and
represent the punctured codeword as $(c_0,\cdots,c_{n'-1})$
with parity-check matrix $H|_{\mathcal{P}}$.

\subsection{Minimum Storage Regenerating Code}

In distributed storage systems, regenerating codes are designed to efficiently repair failed storage nodes. 
A data file $\mathcal{M}$ is encoded and stored across $n$ storage nodes, and each node stores a vector of length $\alpha$, i.e. $\alpha$ subsymbols.
When a node fails, a replacement node is generated by downloading $b$ subsymbols from each of a subset of surviving nodes, referred to as {\it helper nodes.} The total amount of subsymbols $db$ downloaded from $d$ helper nodes is called {\it repair bandwidth.}

In this work, we focus on regenerating codes achieving the minimum storage point, which satisfy the MDS property. Such codes are referred to as minimum storage regenerating (MSR) codes.

\begin{defn}[Minimum storage regenerating code \cite{DGW+10}]
An $(n,k;\alpha)_q$ regenerating code is called a \emph{minimum storage regenerating (MSR) code} if it achieves the cut-set bound with equality, i.e.,
\begin{equation}\label{eq: beta}\small
\begin{split}
\alpha=\frac{\mathcal{M}}{k},
\quad
b = \frac{\alpha}{d-k+1},
\end{split}
\end{equation}
where $d \leq n-1$ is the number of helper nodes and $b$ denotes the number of subsymbols downloaded from each helper node during repair.
\end{defn}
For the general cut-set bound, readers may refer to \cite{DGW+10}.

\subsection{Hadamard-Design MSR Code}\label{sec: hadamard}

We briefly recall the Hadamard-design MSR construction in~\cite{Ye2017Explicit}, which supports arbitrary parameters $n$, $k$, and $d$, and will be used later in our constructions.

\textbf{1. Hadamard-design MSR construction:}
Consider an $(n,k;\alpha)$ MSR code over $\mathbb{F}_q$ with $q>sn$, and define
$
s:=d-k+1,\, \alpha:=s^n.
$
Let $(\mathbf{c}_0,\cdots,\mathbf{c}_{n-1})$ be a codeword, where $\mathbf{c}_i=(c_{0,i},\cdots,c_{\alpha-1,i})^{\top}$ for $i\in[0,n-1]$.

Select $sn$ distinct elements
\begin{equation}\label{eq: distinct points}
\{\lambda_{a,j}:a\in[0,s-1],\,j\in[0,n-1]\}\subseteq\mathbb{F}_q.
\end{equation}

The Hadamard MSR construction is based on structural properties that enable efficient repair. We describe it in two steps.

\textbf{Step (i) Establish the layered MDS property:}
For each $w\in[0,\alpha-1]$, choose elements $\{\alpha_{w,j}:j\in[0,n-1]\}\subseteq\fq$ such that
$
\alpha_{w,j}\neq\alpha_{w,j'}\quad\text{for }j\neq j'.
$

Define an array code by parity-check equations
\begin{equation}\label{eq: had_pc_row}
\sum_{j=0}^{n-1}\alpha_{w,j}^{t}c_{w,j}=0,\qquad t\in[0,n-k-1],\, w\in[0,\alpha-1].
\end{equation}
Clearly, each row is a scalar $[n,k]$ MDS code, and thus the full code is an $(n,k;\alpha)$ MDS array code.

\textbf{Step (ii) Impose coordinate-wise periodicity:}
Index each row by the $s$-ary expansion
$
w=(w_{(0)},\ldots,w_{(n-1)})\in[0,s-1]^n.
$

To endow the above MDS array code with the Hadamard structure, define
\begin{equation}\label{eq: had_row_points_simple}
\alpha_{w,j}:=\lambda_{w_{(j)},j},\qquad j\in[0,n-1].
\end{equation}
By \eqref{eq: distinct points}, we have $\lambda_{w_{(j)},j}\neq \lambda_{w_{(j')},j'}$ for $j\neq j'$.

\textbf{2. Optimal repair property:}
Intuitively, \eqref{eq: had_pc_row} guarantees row-wise MDS reliability, while \eqref{eq: had_row_points_simple} makes each node coefficient depend only on one row digit. This allows us to group rows that differ only in the failed coordinate, so helper interference aligns into one summed subsymbol per helper, whereas the failed-node subsymbols remain distinguishable and can be solved with optimal repair bandwidth.

\textbf{(i) Single-node repair:}
Fix a failed node $j^*\in[0,n-1]$. Let $D\subseteq[0,n-1]\backslash\{j^*\}$ with $|D|=d$ be the helper-node set. For each row index $w$, define the group
\[
\mathcal{G}(w,j^*):=\{R_w(j^*,a):a\in[0,s-1]\},
\]
where $R_w(j^*,a)$ is obtained by replacing the $j^*$-th coordinate of $w$ with $a$. Clearly, the collection of all distinct groups $\mathcal{G}(w,j^*)$ form a partition of $[0,\alpha-1]$.
By the coordinate-wise periodicity, we have
\begin{equation}\label{eq: periodicity}
\alpha_{R_w(j^*,a),j}=
\begin{cases}
\lambda_{w_{(j)},j}, & j\neq j^*,\\
\lambda_{a,j^*}, & j=j^*,
\end{cases}
\quad a\in[0,s-1].
\end{equation}
For each $t\in[0,n-k-1]$, summing the $s$ equations in \eqref{eq: had_pc_row} corresponding to all rows in the group $\mathcal{G}(w,j^*)$ gives
\[\small
\sum_{j\neq j^*}\lambda_{w_{(j)},j}^{t}\,\sum_{a=0}^{s-1}c_{R_w(j^*,a),j}
+\sum_{a=0}^{s-1}\lambda_{a,j^*}^{t}\,c_{R_w(j^*,a),j^*}=0,
\]
where \eqref{eq: periodicity} ensures that the coefficients for $j\neq j^*$ are invariant within each group.
Hence, the repair procedure is given as follows:
\begin{itemize}
\item \textbf{Download:} For each $j\in D$, the $j$-th helper node transmits one subsymbol $\sum_{a=0}^{s-1}c_{R_w(j^*,a),j}$ per row group;
\item \textbf{Repair:} The failed node $j^*$ solves $s$ unknowns $\{c_{R_w(j^*,a),j^*}:\,a\in[0,s-1]\}$ per group.  
\end{itemize}
Running over all groups recovers the whole symbol in node $j^*$,
which gives
$
b=\frac{\alpha}{s}=\frac{\alpha}{d-k+1},
$
meeting the cut-set bound in~\eqref{eq: beta}.

\textbf{(ii) Multiple-node repair:}
Compared with part (i), the parameter change here comes from applying the space-sharing technique
for multi-node repair. Specifically, we extend $d-k+h$ instances of the single-node repair
construction, so the row index is extended and the sub-packetization becomes
\(\alpha=(d-k+h)s^n\). In the multiple-node 
case, the repair is performed in a centralized manner that collects all 
downloaded data from helper nodes and recovers the failed symbols.
We consider the following repair scheme from \cite{Ye2020}. Although originally described for the cooperative repair model, it achieves the centralized cut-set bound once the cooperation phase is removed, as observed in \cite{Zhang2025}.

Index each subsymbol by $(w,z)$ with $w\in[0,s^n-1]$ and $z\in[0,d-k+h-1]$. Thus, each node $u\in[0,n-1]$ stores
\[
\{c_{w,z,u}: w\in[0,s^n-1],\ z\in[0,d-k+h-1]\},
\]
where $w=(w_{(0)},\cdots,w_{(n-1)})\in[0,s-1]^n$ is the $s$-ary expansion of $w$. 
Here $\oplus$ denotes addition modulo $s$.

Let the failed set be $\mathcal{F}=\{j_1,\cdots,j_h\}$ and $D\subseteq[0,n-1]\backslash\mathcal{F}$ with $|D|=d$ be the helper-node set. For each $w$ and $l\in[h]$, define
\begin{equation*}\small
\begin{split}
\mathcal{G}(w,j_l):=\{(R_w(j_l,&w_{(j_l)}\oplus a),\,a) : a\in[0,s-2]\}\cup\\
&\{(R_w(j_l,w_{(j_l)}\oplus (s-1)),\,s+l-2) \},
\end{split}
\end{equation*} 
where these groups form a  partition of $[0,\alpha-1]$.

For each $t\in[0,n-k-1]$, summing all the $s$ parity-check equations corresponding to all rows in group $\mathcal{G}(w,j_l)$ gives
\begin{equation}\label{eq: multi_node equation}\small
\begin{split}
&\sum_{a=0}^{s-2}\lambda_{w_{(j_l)}\oplus a,j_l}^{t}\,c_{R_w(j_l,w_{(j_l)}\oplus a),a,j_l}\\
+&\lambda_{w_{(j_l)}\oplus (s-1),j_l}^{t}\,c_{R_w(j_l,w_{(j_l)}\oplus (s-1)),s+l-2,j_l}\\
+&\sum_{j\neq j_l}\lambda_{w_{(j)},j}^{t}\,\big(\sum_{a=0}^{s-2}c_{R_w(j_l,w_{(j_l)}\oplus a),\,a,j}\\
+&c_{R_w(j_l,w_{(j_l)}\oplus (s-1)),s+l-2,j}\big)=0.
\end{split}
\end{equation}
Then, by \eqref{eq: multi_node equation}, the repair procedure is given as follows:
\begin{itemize}
\item \textbf{Download:} For $l\in[h]$ and a group $\mathcal{G}(w,j_l)$, each helper node $j\in D$ transmits one group-sum
\[\small
\sum_{a=0}^{s-2}c_{R_w(j_l,w_{(j_l)}\oplus a),a,j}+c_{R_w(j_l,w_{(j_l)}\oplus s-1),s+l-2,j},
\]
and the $d$ helper-sums can determine $d-k+h$ unknowns in this group.
\item \textbf{Repair:} The difference from single-node repair is the composition of these unknowns: each group $\mathcal{G}(w,j_l)$ recovers exactly $d-k+h$ subsymbols, where
\begin{itemize}
\item $d-k+1$ subsymbols corresponding to failed node $j_l$: 
\begin{equation*}
\begin{split}
  \{c_{R_w(j_l,w_{(j_l)}\oplus a),a,j_l}: \,&a\in[0,s-2]\}\\&\cup\{c_{R_w(j_l,w_{(j_l)}\oplus s-1),s+l-2,j_l}\}.
  \end{split}
  \end{equation*}
\item $h-1$  subsymbols corresponding to failed nodes $j_{l'}$, $l'\in[h]\backslash\{ l\}$:
\begin{equation*}
\begin{split}
\{\sum_{a=0}^{s-2}c_{R_w(j_{l},w_{(j_{l})}\oplus a),a,j_{l'}}+&c_{R_w(j_{l},w_{(j_{l})}\oplus s-1),s+{l}-2,j_{l'}}:\\&l'\in[h]\setminus\{l\}\}.
\end{split}
\end{equation*}

\end{itemize}
\end{itemize}
Running over all $(w,l)$  and combining all the repaired subsymbols recovers all symbols in $\{j_1,\cdots,j_h\}$.

Each helper sends one symbol per group, i.e., $hs^n$ subsymbols in total. Hence the per-helper repair bandwidth is
$
b=hs^n=\frac{h\alpha}{d-k+h},
$
which meets the centralized multi-node cut-set bound in \cite{Zorgur2019}.

\section{System Model and Convertible MSR Codes}
\label{sec:model}

This section presents the system model and formalizes
convertible MSR codes. We focus on two key abstractions of system reconfiguration:
the cost of code conversion and the repair efficiency after conversion.
Fig.~\ref{fig:convertible-msr} illustrates the system workflow,
and Fig.~\ref{fig:reconfiguration-b} provides a symbol-level
view in the merge regime.

\begin{figure*}[!t]
    \centering
    \includegraphics[width=0.85\textwidth]{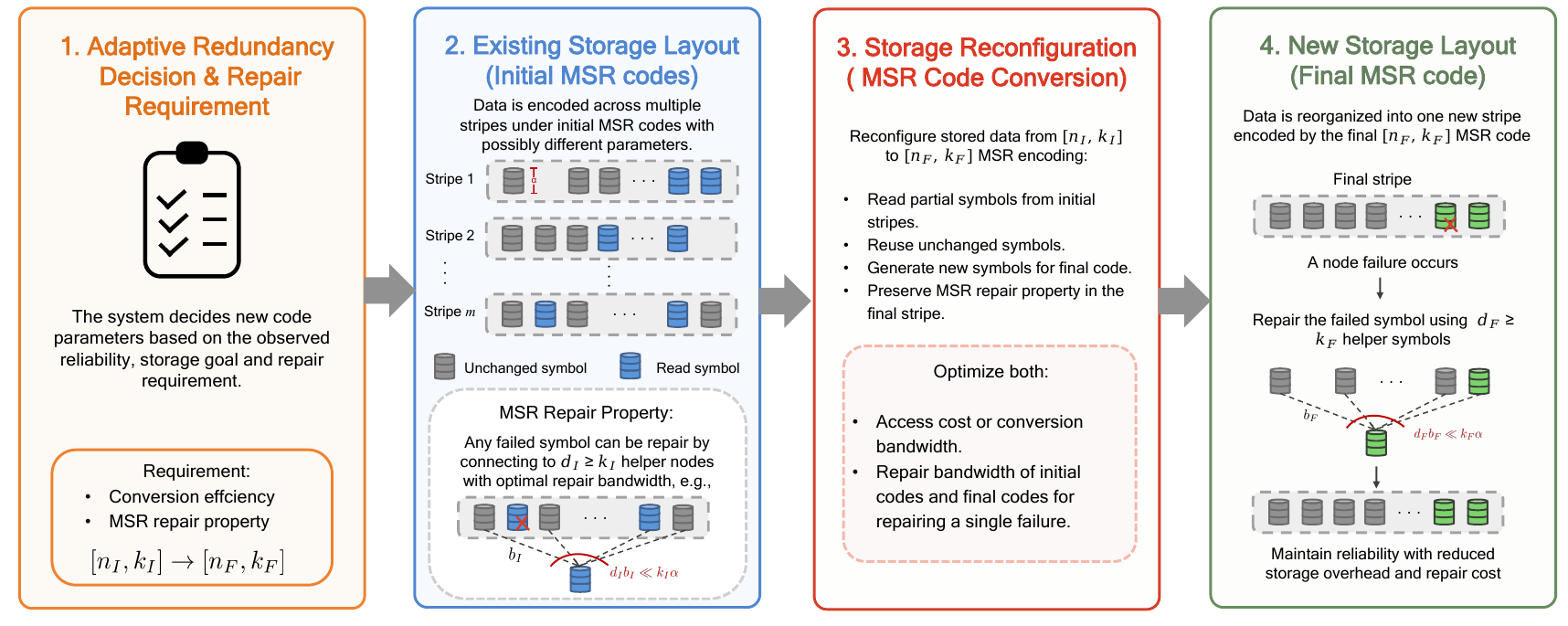}
    \caption{System-level workflow of convertible MSR codes.
    Adaptive redundancy selects new code parameters according
    to reliability, storage, and repair requirements. The stored
    data are then converted from initial MSR codes to a final MSR
    code while preserving efficient node repair after conversion.}
    \label{fig:convertible-msr}
\end{figure*}

\subsection{Motivation: Adaptive Redundancy with Efficient Repair}

Large-scale distributed storage systems encode data across
$n$ storage nodes using an $[n,k]$ erasure code, providing
tolerance against up to $n-k$ simultaneous node failures.
Production deployments at Facebook (HDFS-RAID), Google
(GFS~II), and Microsoft (Windows Azure Storage) rely on such
codes to protect petabyte-scale data. In practice, however,
storage-node reliability is not static: it varies significantly
across device types and over the lifetime of individual devices,
following the well-known bathtub curve of infant mortality,
useful life, and wearout~\cite{kadekodi2019Cluster}. This
motivates \emph{adaptive redundancy}, where the system dynamically
updates the code parameters from $(n_I,k_I)$ to $(n_F,k_F)$
in response to observed reliability, reducing storage overhead
by $11\%$--$44\%$ in production clusters~\cite{kadekodi2019Cluster}.

For the MSR setting considered in this paper, adaptive redundancy
must satisfy two requirements simultaneously, as shown in
Fig.~\ref{fig:convertible-msr}. First, the system should perform
\emph{efficient code conversion}: existing data encoded under
the initial MSR codes should be converted into a final codeword
while accessing only a small subset of the stored symbols.
Second, the resulting final code must still support
\emph{efficient node repair}: after conversion, any failed node
in the final stripe should be repaired by contacting $d$ helper
nodes and downloading the minimum amount of data allowed by the
MSR repair bound.

Existing convertible-code constructions focus on MDS codes and do not preserve the MSR repair property, while standard MSR codes do not support efficient conversion. This work addresses both requirements simultaneously by constructing  \emph{convertible MSR codes}, which combine efficient
MSR-to-MSR conversion with optimal repair of both the initial and
final codes.

\subsection{Code Conversion Model}

We consider the \emph{merge regime},
where $m$ initial codewords encoding disjoint data blocks are
converted into a single final codeword. This corresponds to the
storage reconfiguration stage in Fig.~\ref{fig:convertible-msr}.
Specifically, let
$\mathcal{C}_{I_1},\ldots,\mathcal{C}_{I_m}$ be $m$ initial
MSR array codes over $\mathbb{F}_q$, where
$\mathcal{C}_{I_i}$ is an $(n_{I_i},k_{I_i};\alpha)_q$ code
for each $i\in[m]$. The parameters of the initial codes may be
different, which gives the irregular setting. The conversion
produces a final codeword of an $(n_F,k_F;\alpha)_q$ MSR code
$\mathcal{C}_F$, where
$
    k_F=\sum_{i=1}^{m} k_{I_i}.
$

Fig.~\ref{fig:reconfiguration-b} illustrates this conversion at
the symbol level. The dashed boxes represent initial and final
stripes, and each square represents one vector symbol. Since we
consider array codes, each vector symbol stores $\alpha$
subsymbols, as indicated by the callout in the figure. Selected
read symbols are accessed from the initial stripes and routed to
the reconfiguration engine, which executes the conversion
procedure $\sigma$.

During conversion, symbols are classified into three categories:
\begin{itemize}
    \item \textbf{Unchanged symbols:} symbols that appear
    identically in both an initial codeword and the final
    codeword, and are reused directly in the final codeword;
    \item \textbf{Read symbols:} symbols accessed from the
    initial codewords as inputs to the conversion procedure
    $\sigma$;
    \item \textbf{Written symbols:} symbols of the final codeword
    that are newly computed from the read symbols.
\end{itemize}

The reconfiguration engine in Fig.~\ref{fig:reconfiguration-b}
summarizes the conversion process. It reads the
required symbols from the initial stripes, computes new symbols
using the conversion procedure $\sigma$, and generates the final
stripe by combining unchanged symbols and newly computed symbols.
The objective is to preserve as many unchanged symbols as possible
while minimizing the symbols and subsymbols accessed during
conversion. These requirements are captured by the access-cost
and conversion-bandwidth metrics defined below.

\begin{figure*}[!t]
    \centering
    \resizebox{2\columnwidth}{!}{
    \begin{tikzpicture}[x=1cm,y=1cm]

    \node[font=\bfseries\large] at (8.80,2.20) {};

    \node[font=\bfseries\small] at (3.05,1.62)
    {Initial Stripes};

    \def\sx{0.85}
    \def\sw{4.45}
    \def\sh{0.88}

    \node[font=\small] at (0.45,0.95) {$C_{I_1}$};

    \draw[dashbox]
        (\sx,0.51) rectangle ++(\sw,\sh);

    \node[read] at (1.20,0.95) {};


    \draw[
        dashed,
        draw=redbox,
        line width=0.55pt
    ]
        (1.20,0.95) ellipse [x radius=0.25cm, y radius=0.25cm];

    \draw[
        dashed,
        draw=redbox,
        line width=0.55pt
    ]
        (1.03,1.13) -- (0.58,1.58);

    \draw[
        draw=darkgray,
        line width=0.45pt,
        fill=readblue
    ]
        (0.05,1.58) rectangle (0.95,2.12);

    \draw[darkgray, line width=0.32pt] (0.23,1.58) -- (0.23,2.12);
    \draw[darkgray, line width=0.32pt] (0.41,1.58) -- (0.41,2.12);
    \draw[darkgray, line width=0.32pt] (0.59,1.58) -- (0.59,2.12);
    \draw[darkgray, line width=0.32pt] (0.77,1.58) -- (0.77,2.12);

    \node[
        font=\fontsize{5}{5.5}\selectfont,
        align=center,
        text=darkgray
    ] at (0.50,2.32)
    {store $\alpha$ subsymbols};

    \node[read] at (1.65,0.95) {};
    \node[unchanged] at (2.48,0.95) {};
    \node[read] at (2.92,0.95) {};
    \node[font=\small] at (3.56,0.95) {$\cdots$};
    \node[unchanged] at (4.22,0.95) {};
    \node[unchanged] at (4.70,0.95) {};

    \node[font=\small] at (0.45,-0.30) {$C_{I_2}$};

    \draw[dashbox]
        (\sx,-0.74) rectangle ++(\sw,\sh);

    \node[read] at (1.65,-0.30) {};
    \node[read] at (2.08,-0.30) {};
    \node[unchanged] at (2.48,-0.30) {};
    \node[read] at (2.92,-0.30) {};
    \node[font=\small] at (3.56,-0.30) {$\cdots$};
    \node[read] at (4.22,-0.30) {};

    \node[font=\large] at (2.90,-1.12) {$\vdots$};

    \node[font=\small] at (0.45,-1.92) {$C_{I_m}$};

    \draw[dashbox]
        (\sx,-2.36) rectangle ++(\sw,\sh);

    \node[read] at (1.20,-1.92) {};
    \node[unchanged] at (1.65,-1.92) {};
    \node[read] at (2.08,-1.92) {};
    \node[unchanged] at (2.48,-1.92) {};
    \node[unchanged] at (2.92,-1.92) {};
    \node[font=\small] at (3.56,-1.92) {$\cdots$};
    \node[unchanged] at (4.22,-1.92) {};
    \node[read] at (4.70,-1.92) {};


    \def\busOne{0.66}
    \def\busTwo{-0.60}
    \def\busM{-2.22}

    \coordinate (trunkTop) at (6.20,\busOne);
    \coordinate (trunkMid) at (6.20,\busTwo);
    \coordinate (trunkBot) at (6.20,\busM);
    \coordinate (mergeOut) at (6.88,\busTwo);

    \draw[darkgray!80, line width=0.42pt]
        (1.20,0.78) -- (1.20,\busOne);
    \draw[darkgray!80, line width=0.42pt]
        (1.65,0.78) -- (1.65,\busOne);
    \draw[darkgray!80, line width=0.42pt]
        (2.92,0.78) -- (2.92,\busOne);

    \draw[darkgray!80, line width=0.60pt]
        (1.20,\busOne) -- (2.92,\busOne);

    \draw[darkgray, line width=0.85pt]
        (2.92,\busOne) -- (5.55,\busOne) -- (trunkTop);

    \draw[darkgray!80, line width=0.42pt]
        (1.65,-0.48) -- (1.65,\busTwo);
    \draw[darkgray!80, line width=0.42pt]
        (2.08,-0.48) -- (2.08,\busTwo);
    \draw[darkgray!80, line width=0.42pt]
        (2.92,-0.48) -- (2.92,\busTwo);
    \draw[darkgray!80, line width=0.42pt]
        (4.22,-0.48) -- (4.22,\busTwo);

    \draw[darkgray!80, line width=0.60pt]
        (1.65,\busTwo) -- (4.22,\busTwo);

    \draw[darkgray, line width=0.85pt]
        (4.22,\busTwo) -- (trunkMid);

    \draw[darkgray!80, line width=0.42pt]
        (1.20,-2.10) -- (1.20,\busM);
         \draw[darkgray!80, line width=0.42pt]
        (2.08,-2.10) -- (2.08,\busM);
    \draw[darkgray!80, line width=0.42pt]
        (4.70,-2.10) -- (4.70,\busM);

    \draw[darkgray!80, line width=0.60pt]
        (1.20,\busM) -- (4.70,\busM);

    \draw[darkgray, line width=0.85pt]
        (4.70,\busM) -- (trunkBot);

    \draw[darkgray, line width=0.95pt]
        (trunkTop) -- (trunkBot);

    \draw[darkgray, line width=0.95pt]
        (trunkMid) -- (mergeOut);

    \draw[
        -{Triangle[length=3.0mm,width=3.0mm]},
        line width=1.5pt,
        draw=darkgray
    ]
        (mergeOut) -- (7,\busTwo);

\draw[
    rounded corners=4pt,
    dashed,
    draw=redbox,
    fill=orangebox!7,
    line width=0.8pt
]
    (7.00,-2.35) rectangle (10.55,1.72);

\node[
    font=\bfseries\fontsize{7.2}{7.8}\selectfont,
    text=redbox,
    align=center
] at (8.72,1.44)
{Reconfiguration Engine};


\draw[
    rounded corners=8pt,
    draw=bluebox!60!black,
    fill=bluebox!6,
    line width=0.55pt
]
    (7.12,0.32) rectangle (10.43,1.22);

\draw[darkgray, line width=0.48pt, fill=white]
    (7.52,0.55) ellipse [x radius=0.125, y radius=0.045];
\draw[darkgray, line width=0.48pt]
    (7.395,0.55) -- (7.395,0.90);
\draw[darkgray, line width=0.48pt]
    (7.645,0.55) -- (7.645,0.90);
\draw[darkgray, line width=0.48pt, fill=white]
    (7.52,0.90) ellipse [x radius=0.125, y radius=0.045];

\draw[darkgray, line width=0.52pt, fill=white]
    (7.86,0.80) circle (0.085);
\draw[darkgray, line width=0.52pt]
    (7.92,0.74) -- (8.05,0.58);

\node[
    anchor=west,
    font=\bfseries\fontsize{5.65}{6.15}\selectfont,
    text=darkgray
] at (8,1.02)
{1. Read Symbols};

\node[
    anchor=west,
    font=\fontsize{4.85}{5.35}\selectfont,
    text=darkgray,
    text width=2.24cm,
    align=left
] at (8,0.70)
{Access required symbols from initial stripes.};

\draw[
    rounded corners=8pt,
    draw=greenbox!60!black,
    fill=greenbox!7,
    line width=0.55pt
]
    (7.12,-0.82) rectangle (10.43,0.08);

\draw[darkgray, line width=0.50pt, fill=white]
    (7.39,-0.55) rectangle (7.82,-0.17);
\draw[darkgray, line width=0.42pt, fill=gray!8]
    (7.48,-0.455) rectangle (7.72,-0.265);

\foreach \y in {-0.49,-0.41,-0.33,-0.25}{
    \draw[darkgray, line width=0.36pt] (7.32,\y) -- (7.39,\y);
    \draw[darkgray, line width=0.36pt] (7.82,\y) -- (7.89,\y);
}
\foreach \x in {7.47,7.56,7.65,7.74}{
    \draw[darkgray, line width=0.36pt] (\x,-0.63) -- (\x,-0.55);
    \draw[darkgray, line width=0.36pt] (\x,-0.17) -- (\x,-0.09);
}

\node[
    anchor=west,
    font=\bfseries\fontsize{5.55}{6.05}\selectfont,
    text=darkgray
] at (8,-0.13)
{2. Compute new symbols};

\node[
    anchor=west,
    font=\fontsize{4.35}{4.90}\selectfont,
    text=darkgray,
    text width=2.28cm,
    align=left
] at (8,-0.47)
{Use read symbols and compute new symbols by conversion procedure $\sigma$.};

\draw[
    rounded corners=8pt,
    draw=purple!60!black,
    fill=purple!7,
    line width=0.55pt
]
    (7.12,-1.96) rectangle (10.43,-1.06);

\node[
    rectangle,
    draw=darkgray,
    fill=white,
    minimum width=0.16cm,
    minimum height=0.16cm,
    inner sep=0pt,
    line width=0.40pt
] at (7.38,-1.52) {};

\node[
    rectangle,
    draw=darkgray,
    fill=white,
    minimum width=0.16cm,
    minimum height=0.16cm,
    inner sep=0pt,
    line width=0.40pt
] at (7.62,-1.52) {};

\node[
    rectangle,
    draw=darkgray,
    fill=white,
    minimum width=0.16cm,
    minimum height=0.16cm,
    inner sep=0pt,
    line width=0.40pt,
    dashed
] at (7.86,-1.52) {};

\node[
    rectangle,
    draw=darkgray,
    fill=white,
    minimum width=0.16cm,
    minimum height=0.16cm,
    inner sep=0pt,
    line width=0.40pt,
    dashed
] at (8.10,-1.52) {};

\node[
    anchor=west,
    font=\bfseries\fontsize{5.55}{6.05}\selectfont,
    text=darkgray
] at (8,-1.30)
{3. Generate Final Stripe};

\node[
    anchor=west,
    font=\fontsize{4.25}{4.80}\selectfont,
    text=darkgray,
    text width=2.18cm,
    align=left
] at (8.1,-1.66)
{Combine unchanged symbols and newly computed symbols into the target final stripe.};


\draw[darkgray, line width=0.50pt]
    (8.78,0.28) -- (8.78,0.20);
\path[fill=darkgray]
    (8.78,0.10) -- (8.705,0.23) -- (8.855,0.23) -- cycle;

\draw[darkgray, line width=0.50pt]
    (8.78,-0.86) -- (8.78,-0.94);
\path[fill=darkgray]
    (8.78,-1.04) -- (8.705,-0.91) -- (8.855,-0.91) -- cycle;

    \draw[midarrow] (10.55,0.55) -- (11.35,0.55);

    \node[font=\bfseries\small] at (13.90,1.30)
    {Final Stripe $C_F$};

    \draw[dashbox]
        (11.35,0.08) rectangle (16.98,1.02);

    \node[unchanged] at (11.85,0.55) {};
    \node[unchanged] at (12.45,0.55) {};
    \node[unchanged] at (13.05,0.55) {};
    \node[unchanged] at (13.65,0.55) {};
    \node[font=\small] at (14.25,0.55) {$\cdots$};
    \node[unchanged] at (14.85,0.55) {};
    \node[newsym] at (15.45,0.55) {};
    \node[newsym] at (16.05,0.55) {};
    \node[newsym] at (16.65,0.55) {};

    \draw[
        rounded corners=3pt,
        draw=gray!60,
        fill=white,
        line width=0.45pt
    ]
        (11.25,-2.58) rectangle (17.55,-0.72);

    \node[read] at (11.70,-1.12) {};
    \node[
        anchor=west,
        font=\fontsize{5.8}{6.5}\selectfont,
        text width=5.20cm,
        align=left
    ] at (12.15,-1.12)
    {Read symbols: accessed from initial stripes};

    \node[unchanged] at (11.70,-1.60) {};
    \node[
        anchor=west,
        font=\fontsize{5.8}{6.5}\selectfont,
        text width=5.20cm,
        align=left
    ] at (12.15,-1.60)
    {Unchanged symbols: reused directly};

    \node[newsym] at (11.70,-2.08) {};
    \node[
        anchor=west,
        font=\fontsize{5.8}{6.5}\selectfont,
        text width=5.20cm,
        align=left
    ] at (12.15,-2.08)
    {New symbols: generated for the final stripe};

    \end{tikzpicture}
    }
    \caption{Symbol-level illustration of code conversion in the merge regime.
    Irregular settings may involve different initial codes,
    but the stripe lengths are drawn uniformly to avoid implying different parameters.}
    \label{fig:reconfiguration-b}
\end{figure*}

\subsection{Formal Definition and Performance Metrics}

\begin{defn}[Irregular convertible code in the merge
regime~\cite{ge2024MDS}]
Let $m$ be a positive integer. An $(m,1)_q$ irregular
convertible code $\mathscr{C}$ over $\mathbb{F}_q$ consists of
\begin{enumerate}
\item[1)] $m$ initial codes
$\mathcal{C}_{I_1},\ldots,\mathcal{C}_{I_m}$ and a final code
$\mathcal{C}_F$, where $\mathcal{C}_{I_i}$ is an
$(n_{I_i},k_{I_i};\alpha)_q$ code for $i\in[m]$, and
$\mathcal{C}_F$ is an $(n_F,k_F;\alpha)_q$ code satisfying
$\sum_{i=1}^m k_{I_i}=k_F$;
\item[2)] an injective conversion procedure
$
\sigma:\prod_{i=1}^m \mathcal{C}_{I_i}\rightarrow
\mathcal{C}_F.
$
\end{enumerate}
\end{defn}

The efficiency of conversion is quantified by two metrics.
The \emph{access cost} $\rho=\rho_R+\rho_W$ counts the total
number of symbols touched by the conversion, where $\rho_R$ is
the number of read symbols and $\rho_W$ is the number of newly
written final-code symbols. The \emph{conversion bandwidth}
$\gamma=\gamma_R+\gamma_W$ measures the total amount of data
transferred at the subsymbol level, where $\gamma_R$ and
$\gamma_W$ are the read and write bandwidths, respectively.
Since each vector symbol consists of $\alpha$ subsymbols,
conversion bandwidth captures a finer-grained cost than symbol
access. These metrics serve as the design targets for the
constructions in this paper, guided by the following lower
bounds.

\begin{thm}[\cite{ge2024MDS}]\label{bound on access}
For any $(m,1)_q$ MDS irregular convertible code, the access
cost satisfies
\begin{equation*}
\rho \ge r_{F} + \sum_{\substack{i\in[m],\,r_{F}\le\\
\min\{k_{I_i},r_{I_i}\}}} r_{F}
+ \sum_{\substack{i\in[m],\,r_{F}>\\
\min\{k_{I_i},r_{I_i}\}}} k_{I_i},
\end{equation*}
where $r_{F}=n_{F}-k_{F}$ and $r_{I_i}=n_{I_i}-k_{I_i}$.
\end{thm}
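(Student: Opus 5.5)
The bound decomposes into a write-cost term $r_F$ and a per-code read-cost term. I would prove the two separately, arguing at the level of symbols (an MDS array code behaves like a scalar MDS code over $\mathbb{F}_{q^\alpha}$-sized alphabets for the information-theoretic counting). I follow the approach of \cite{maturana2022Convertible,ge2024MDS}: first a lower bound on the number of written symbols, then a lower bound on the number of read symbols from each initial codeword.

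\emph{Write cost.} I claim $\rho_W\ge r_F$. The final codeword contains at most $k_{I_i}$ unchanged symbols coming from initial codeword $i$. If it contained more, say $k_{I_i}+1$ unchanged symbols from the same initial codeword, then by the MDS property of $\mathcal{C}_{I_i}$ these symbols would be functionally dependent. That contradicts the MDS property of $\mathcal{C}_F$, because any $k_F$ symbols of $\mathcal{C}_F$ must be independent and carry the full message $\prod_i\mathcal{C}_{I_i}$, which ranges over all values by injectivity of $\sigma$. Hence the number of unchanged symbols is at most $\sum_i k_{I_i}=k_F$, so at least $n_F-k_F=r_F$ symbols are written.

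\emph{Read cost, per initial code.} Fix $i$ and let $u_i\le k_{I_i}$ be the number of unchanged symbols from codeword $i$. The final codeword has $n_F-\sum_j u_j\ge r_F$ new symbols.

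Consider a set of $k_F$ final symbols consisting of all unchanged symbols of the other codes $j\ne i$, the $u_i$ unchanged symbols of code $i$, and enough new symbols to reach $k_F$. By MDS decodability these determine the data of code $i$. Conditioning on the data of the other codes, the new symbols are functions only of the read symbols.

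If $u_i=k_{I_i}$, the new symbols can be chosen so that they still must carry the information about code $i$. Taking instead $r_F$ new symbols together with $k_F-r_F$ unchanged symbols in which $r_F$ of code $i$'s unchanged symbols are omitted, the new symbols must supply $\min\{r_F,k_{I_i}\}$ symbols' worth of information about code $i$. That information has to come from read symbols of code $i$ that are not among the retained unchanged ones. So $\rho_{R,i}\ge\min\{r_F,k_{I_i}\}$.

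Also, if $r_F>r_{I_i}$, the read set must contain $k_{I_i}$ symbols. To see this, suppose fewer than $k_{I_i}$ symbols of code $i$ are read. The unread ones, at least $n_{I_i}-k_{I_i}+1=r_{I_i}+1$ of them, then have some freedom given the read ones. One can then erase up to $r_F$ final symbols, including those unchanged ones, and get a contradiction with the ability of $\mathcal{C}_F$ to tolerate $r_F$ erasures. Symmetrically, when $r_F>k_{I_i}$ one needs all $k_{I_i}$.

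Combining: if $r_F\le\min\{k_{I_i},r_{I_i}\}$ then $\rho_{R,i}\ge r_F$, and otherwise $\rho_{R,i}\ge k_{I_i}$. Summing over $i$ and adding $\rho_W\ge r_F$ gives the claimed bound.

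\emph{Main obstacle.} The hard step is the case $r_{I_i}<r_F\le k_{I_i}$, where one must show that reading $r_F$ symbols is not enough and $k_{I_i}$ symbols are needed. I would follow the entropy and erasure argument of \cite{maturana2022Convertible} (Lemma on ``$r_F>r_I$ forces reading $k_I$''). Fix the unread positions of code $i$, then count the dimension of the data of code $i$ compatible with the read symbols, and show it must be zero. This uses the fact that the $r_F$ new symbols together with unchanged symbols must protect against erasing all $u_i$ unchanged ones once $r_F>r_{I_i}$.

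The irregular extension of this argument is exactly \cite{ge2024MDS}, since each initial code is handled independently after conditioning on the others. The array (vector-symbol) setting changes nothing, because all arguments count whole symbols via dimensions over $\mathbb{F}_q$ divided by $\alpha$.
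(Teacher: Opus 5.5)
\paragraph{Review of the proof proposal.} The paper does not prove this bound; it quotes it from \cite{ge2024MDS}. Your sketch therefore has to stand on its own.

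\paragraph{What is correct.} Your architecture is the standard one: at most $k_{I_i}$ unchanged symbols per initial codeword, hence $\rho_W\ge r_F$, followed by a per-codeword read bound. The write-cost step is correct.

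\paragraph{The gap.} The genuine gap is the step you flag yourself: $|R_i|\ge k_{I_i}$ when $r_{I_i}<r_F\le k_{I_i}$. You defer it, and the mechanism you sketch does not work as stated. $\mathcal{C}_F$ tolerates only $r_F$ erasures. So you cannot erase all unread unchanged symbols of codeword $i$, nor use that the code ``must protect against erasing all $u_i$ unchanged ones.'' For example, if the conversion keeps all $k_{I_i}$ systematic symbols of codeword $i$ and reads only its parities, then all $k_{I_i}$ unchanged symbols are unread. That is more than $r_F$ whenever $r_F<k_{I_i}$. The total number of unread symbols (at least $r_{I_i}+1$) is also not the relevant quantity. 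Separately, your $\min\{r_F,k_{I_i}\}$ step is argued only for $u_i=k_{I_i}$. Non-stable conversions, with fewer unchanged and more written symbols, are therefore not covered.

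\paragraph{The missing idea.} Count the read symbols that lie outside the unchanged set.
\begin{itemize}
\item[1)] Fix $\mathbf{c}_j$ for $j\ne i$. Let $U_i$ and $R_i$ be the unchanged and read positions of codeword $i$, with $|U_i|=u_i\le k_{I_i}$. Put $a=|R_i\cap U_i|$ and $b=|R_i\setminus U_i|\le n_{I_i}-u_i$.
\item[2)] Erase $\min\{r_F,u_i-a\}$ unread unchanged symbols of codeword $i$ from the final codeword. At least $n_F-r_F=k_F$ symbols survive.
\item[3)] The survivors determine $\mathbf{c}_i$, by the MDS property of $\mathcal{C}_F$ and the bijectivity of $\sigma$.
\item[4)] Written symbols depend only on read symbols. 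So, given the other codewords, the survivors are functions of $\mathbf{c}_i$ restricted to $R_i$ and to the $\max\{0,u_i-a-r_F\}$ non-erased unread unchanged positions.
\item[5)] Counting codewords of $\mathcal{C}_{I_i}$ (no linearity needed) gives $|R_i|+\max\{0,u_i-a-r_F\}\ge k_{I_i}$.
\end{itemize}

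\paragraph{Case analysis.} If $u_i-a\le r_F$, the inequality reads $|R_i|\ge k_{I_i}$. Otherwise it reads $b+u_i-r_F\ge k_{I_i}$. Combined with $b\le k_{I_i}+r_{I_i}-u_i$, this forces $r_F\le r_{I_i}$. Since $u_i>r_F$ in this branch, it also forces $r_F<k_{I_i}$. In this branch one still gets $|R_i|\ge b\ge k_{I_i}-u_i+r_F\ge r_F$.

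\paragraph{Conclusion.} For every value of $u_i$, $|R_i|\ge k_{I_i}$ whenever $r_F>\min\{k_{I_i},r_{I_i}\}$, and $|R_i|\ge r_F$ otherwise. Together with $\rho_W\ge r_F$, this yields the theorem, including the non-stable case.
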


\begin{thm}[\cite{maturana2023Bandwidth}]\label{bound on bandwidth}
For any $(m,1)_q$ MDS convertible code with $m$ identical
initial $(n_I,k_I;\alpha)_q$ codes and one final
$(n_F,mk_I;\alpha)_q$ code, the read conversion bandwidth
satisfies
\begin{equation*}
\gamma_R \ge \begin{cases}
m\alpha\min\{k_I,r_F\}, & r_I\ge r_F \text{ or } k_I\le r_F,\\
m\alpha\!\left(r_I+k_I\!\left(1-\tfrac{r_I}{r_F}\right)\right),
& \text{otherwise},
\end{cases}
\end{equation*}
and $\gamma_W\ge r_F\alpha$, where $r_I=n_I-k_I$.
\end{thm}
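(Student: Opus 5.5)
The bound of Theorem~\ref{bound on bandwidth} is the bandwidth lower bound of \cite{maturana2023Bandwidth}. I would prove it by combining a cut-set style argument on the information flow from the initial stripes to the new final parities with a counting of how many unchanged symbols can be retained.

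The write bound $\gamma_W\ge r_F\alpha$ comes first and is the easy part. Since $k_F=mk_I$ and the final code is MDS, at most $k_F$ final symbols can be unchanged. If more than $k_F$ final symbols were copied, they would be $k_F+1$ symbols whose values are fixed by the initial codewords, and these must satisfy a linear relation that generally fails, because the initial codewords are independent. Hence at least $r_F$ final symbols are newly written, each of size $\alpha$, giving $\gamma_W\ge r_F\alpha$.

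For the read bound I would first reduce to a canonical configuration. Using the injectivity of $\sigma$ and the MDS property, I would argue WLOG that each initial codeword contributes exactly $k_I$ unchanged symbols to the final codeword; if it contributes fewer, the reads only grow. The $r_F$ new parities are then functions of data read from the initial stripes. Fix an initial codeword $i$. Take a set of $k_F$ final symbols consisting of the $r_F$ new parities (or $\min\{r_F,k_I\}$ of them) together with unchanged symbols from the other $m-1$ codewords and the remaining unchanged symbols from codeword $i$. By the MDS property of $\mathcal{C}_F$, this set determines all data, in particular the $k_I\alpha$ data of codeword $i$. Conditioning on the other codewords, the information about codeword $i$ that is not in its retained unchanged symbols, namely $\min\{k_I,r_F\}\alpha$ units, must flow through what was read from codeword $i$. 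This gives $\gamma_R^{(i)}\ge\alpha\min\{k_I,r_F\}$, and summing over $i$ yields the first case.

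For $r_I<r_F<k_I$ I would use a finer information-flow argument. The read data from codeword $i$ can come from unchanged nodes (retained systematic nodes) and from its $r_I$ parity nodes. Each parity node contributes at most $\alpha$. Suppose $\beta$ is read from each of $k_I$ retained nodes, after symmetrizing by averaging over node permutations. Consider erasing $r_F$ of the unchanged nodes of codeword $i$. Recovering these must use the $r_I$ parities plus the $\beta$-sized downloads from the erased nodes themselves, which yields the constraint $r_I\alpha+r_F\beta\ge r_F\alpha$, i.e., $\beta\ge\alpha(1-r_I/r_F)$. Hence $\gamma_R^{(i)}\ge r_I\alpha+k_I\beta=\alpha\bigl(r_I+k_I(1-r_I/r_F)\bigr)$.

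The main obstacle is making this last step rigorous. The reads from different unchanged nodes need not be uniform, so I would take an LP or entropy formulation and average over all $\binom{k_I}{r_F}$ choices of erased sets, using submodularity to show that the uniform allocation is the worst case for the sum.
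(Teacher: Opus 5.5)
The paper does not prove this theorem; it quotes it from \cite{maturana2023Bandwidth}. Your plan is the natural information-flow argument for it: for each initial codeword, a cut through a data collector on $k_F$ final nodes, conditioning on the other codewords, followed by an LP/averaging step over per-node download amounts. Your treatment of $\gamma_W$ and of $\gamma_R^{(i)}\ge\alpha\min\{k_I,r_F\}$ is sound in substance, modulo a remark below.

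The genuine gap is in the case $r_I<r_F<k_I$. Your cut actually yields, for every $r_F$-subset $S$ of the unchanged nodes of codeword $i$, the constraint $B_i+\sum_{u\in S}\beta_{i,u}\ge r_F\alpha$. Here $B_i\le r_I\alpha$ is what is \emph{actually read} from the retired nodes of codeword $i$. You derive $\beta\ge\alpha(1-r_I/r_F)$ by substituting the largest admissible value $B_i=r_I\alpha$, and then add $r_I\alpha$ to the total as if that value were forced. Nothing you wrote excludes a scheme that reads less from the retired nodes and more from the unchanged ones.

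The missing step is this trade-off. Averaging the linear constraints over all $\binom{k_I}{r_F}$ choices of $S$ gives $\sum_{u}\beta_{i,u}\ge\frac{k_I}{r_F}(r_F\alpha-B_i)$. Hence $\gamma_R^{(i)}\ge k_I\alpha-\bigl(\frac{k_I}{r_F}-1\bigr)B_i$, which is decreasing in $B_i$ exactly because $r_F<k_I$. So it is minimized at $B_i=r_I\alpha$, giving $\alpha\bigl(r_I+k_I(1-\tfrac{r_I}{r_F})\bigr)$; when $r_I\ge r_F$ the minimum sits at $B_i=r_F\alpha$ and returns the first case. The non-uniformity you single out as the main obstacle is the easy part: plain averaging of linear inequalities, with no submodularity needed. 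The optimization over $B_i$ is the step you actually skipped.

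Second, your reduction to stable conversions (``if it contributes fewer, the reads only grow'') is asserted, not proved, and the statement as quoted here does not assume stability. You need to rerun the cut when codeword $i$ keeps only $u_i<k_I$ unchanged nodes. There are then $r_I+k_I-u_i$ retired nodes, you may take $|S|=\min\{u_i,r_F\}$, and the right-hand side becomes $(k_I-u_i+|S|)\alpha$. The same optimization gives, for instance, $\gamma_R^{(i)}\ge\alpha(k_I+r_I-u_ir_I/r_F)$ when $r_I<r_F\le u_i$, and $\gamma_R^{(i)}\ge k_I\alpha$ when $u_i<r_F$. In every regime the resulting bound is nonincreasing in $u_i$, so it is smallest at the stable value $u_i=k_I$; this computation is what justifies your WLOG.

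Finally, your reason for there being at most $k_F$ unchanged nodes (a linear relation that ``generally fails'') does not cover, for example, all copied symbols coming from a single initial codeword. The clean argument: $k_I+1$ nodes of one initial codeword carry only $k_I\alpha$ units of information, whereas any $k_I+1\le k_F$ nodes of the final MDS codeword carry $(k_I+1)\alpha$. Hence each initial codeword contributes at most $k_I$ unchanged nodes.
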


The second requirement in Fig.~\ref{fig:convertible-msr} is that
efficient repair must remain available after conversion. In other
words, the conversion should not merely produce an MDS final code;
it should produce a final code that is still an MSR code. This
ensures that once the system transitions to the new code
configuration, individual node failures can still be repaired with
optimal repair bandwidth. This leads to the following definition.

\begin{defn}[Convertible MSR code]
An $(m,1)_q$ irregular convertible code $\mathscr{C}$ is a
\emph{convertible MSR code} if each initial code
$\mathcal{C}_{I_i}$, $i\in[m]$, and the final code
$\mathcal{C}_F$ are all MSR codes.
\end{defn}

\section{A General Conversion Framework for MDS Array Codes}\label{sec: framework}

Motivated by the scalar MDS convertible code in~\cite{ge2024MDS}, in this section we present a general conversion framework from $m$ $(n_{I_i},k_{I_i};\alpha)_q$ MDS array codes, $i\in[m]$, to an $(n_F,k_F;\alpha)_q$ MDS array code.

Throughout this section, we focus on MDS array codes. Specifically, for each $i\in[m]$ and each subsymbol index $w\in[0,\alpha-1]$, the coordinate-wise vector $(c^{(i)}_{w,0},\cdots,c^{(i)}_{w,n_{I_i}-1})$ forms an $(n_{I_i},k_{I_i};1)_q$ scalar MDS codeword. Therefore, the conversion can be carried out row-wise for each fixed $w$, and the final array code is obtained by stacking all rows.

We first describe the initial-code and final-code setup, and then specify the conversion procedures $\sigma_1$ and $\sigma_2$ under different parameter regimes.

Define $n_F = r_F+\sum_{i=1}^{m} k_{I_i}$ and $k_F=\sum_{i=1}^{m} k_{I_i}$. An $(m,1)_q$ convertible code $\mathscr{C}$ consists of the following three parts:

\textbf{Initial codes:} 
For $i\in[m]$, choose the initial $(n_{I_i},k_{I_i};\alpha)_q$ MDS array code $\mathcal{C}_{I_i}$. Let $\mathbf{c}_i=(\mathbf{c}_{i,0},\ldots,\mathbf{c}_{i,n_{I_i}-1})$ be a codeword of $\mathcal{C}_{I_i}$. 

For each $w\in[0,\alpha-1]$, the row $(c^{(i)}_{w,0},\cdots,c^{(i)}_{w,n_{I_i}-1})$ is a codeword of an $(n_{I_i},k_{I_i};1)_q$ MDS code with an $r_{I_i}\times n_{I_i}$ parity-check matrix
$H_{i,w}.$

\textbf{Final code:} 
The final code consists of the first $\sum_{i=1}^m k_{I_i}$ symbols copied from the initial codes and $r_F$ newly generated symbols.

Define the final code as
\begin{equation*}
\mathcal{C}_F := \{ \sigma(\mathbf{c}_1,\dots,\mathbf{c}_{m}) : \mathbf{c}_i\in\mathcal{C}_{I_i}, i\in[m]  \},
\end{equation*}
where $\sigma$ is a conversion procedure defined as follows.

\textbf{Conversion procedures: } 
For each fixed row $w$, the conversion follows three steps: 
(i) identify the unchanged symbols and read symbols, 
(ii) compute the written symbols, and
(iii) deduce the parity-check matrix of the final code. 

We consider the following two parameter regimes:
\begin{itemize}
\item Regime I: $r_F\le \min\{k_{I_i},r_{I_i}\}$ for all $i\in[m]$, where we use conversion procedure $\sigma_1$;
\item Regime II: $r_F>\min\{k_{I_i},r_{I_i}\}$ for all $i\in[m]$, where we use conversion procedure $\sigma_2$.
\end{itemize}

We describe the two procedures separately.

\begin{itemize}

\item[*]$\sigma_1:$ 
\textbf{Step (i):}
For each $i\in[m]$, we define:
1) the \emph{unchanged symbols} as the first $k_{I_i}$ symbols, 
i.e., $\{\mathbf{c}_{i,j}: j\in[0,k_{I_i}-1]\}$;
2) the \emph{read symbols} as the symbols indexed from $k_{I_i}$ 
to $k_{I_i}+r_F-1$, i.e., $\{\mathbf{c}_{i,j}: j\in[k_{I_i},
k_{I_i}+r_F-1]\}$;
3) the \emph{written symbols} as $\hat{\mathbf{c}}=(\hat{\mathbf{c}}_0,
\hat{\mathbf{c}}_1,\cdots,\hat{\mathbf{c}}_{r_F-1})$, where 
$\hat{\mathbf{c}}_j=(\hat{c}_{0,j},\cdots,\hat{c}_{\alpha-1,j})$, 
$j\in[0,r_F-1]$.

Thus, the final codeword can be represented by
\begin{equation}\label{eq: final_codeword_1}
\mathbf{c}_F = (\mathbf{c}_1|_{[0,k_{I_1}-1]}, \dots, 
\mathbf{c}_m|_{[0,k_{I_{m}}-1]}, \hat{\mathbf{c}}).
\end{equation} 

\textbf{Step (ii):}
For each $i\in[m]$, let $\mathcal{P}_i=\{0,1,\cdots,k_{I_i}+r_F-1\}$ 
denote the set consisting of the first $k_{I_i}$ symbols and an 
additional $r_F$ symbols. Since $\mathcal{C}_{I_i}$ is an MDS code, 
its punctured code $\mathcal{C}_{I_i}|_{\mathcal{P}_i}$ is also MDS, 
and the parity-check matrix of $\mathcal{C}_{I_i}|_{\mathcal{P}_i}$ 
can be written as
\begin{equation}\label{eq: punctured_pc}
H_{i,w}|_{\mathcal{P}_i}= [\hat{H}_{i,w}^{(\mathcal{P}_i)} \mid 
\bar{H}_{i,w}^{(\mathcal{P}_i)}], \quad w\in[0,\alpha-1],
\end{equation}
where $\hat{H}_{i,w}^{(\mathcal{P}_i)}$ is an $r_{F}\times k_{I_i}$ 
matrix corresponding to the first $k_{I_i}$ symbols, and 
$\bar{H}_{i,w}^{(\mathcal{P}_i)}$ is an $r_{F}\times r_{F}$ 
invertible matrix corresponding to the remaining $r_F$ symbols. 
This gives
\begin{equation}\label{eq: pc_ic}
\begin{split}
\hat{H}_{i,w}^{(\mathcal{P}_i)}({c}^{(i)}_{w,0},&\cdots,
{c}^{(i)}_{w,k_{I_i}-1})^{\top}=\\
&-\bar{H}_{i,w}^{(\mathcal{P}_i)}({c}^{(i)}_{w,k_{I_i}},\cdots,
{c}^{(i)}_{w,k_{I_i}+r_F-1})^{\top}.
\end{split}
\end{equation}
Given an $r_F\times r_F$ invertible matrix $\hat{H}_w$ to be 
specified in each construction, the written symbols are computed as
\begin{equation}\label{eqn conA}
\begin{split}
(&\hat{{c}}_{w,0},\hat{{c}}_{w,1},\cdots,\hat{{c}}_{w,r_F-1})^\top= \\
&\hat{H}^{-1}_w \sum_{i=1}^{m} \bar{H}_{i,w}^{(\mathcal{P}_i)} 
\cdot (c^{(i)}_{w,k_{I_i}},\cdots,c^{(i)}_{w,k_{I_i}+r_F-1})^\top.
\end{split}
\end{equation}

\textbf{Step (iii):}
It follows from~\eqref{eq: final_codeword_1}, \eqref{eq: pc_ic}, 
and~\eqref{eqn conA} that the final code $\mathcal{C}_F$ has 
parity-check matrix
\begin{equation}\label{eq: parity_sigma_1}
H_{F,w}:=
[\hat{H}_{1,w}^{(\mathcal{P}_1)}| \cdots| \hat{H}_{m,w}^{(\mathcal{P}_m)}
| \hat{H}_w], \,w\in[0,\alpha-1].
\end{equation}
The MDS property of $\mathcal{C}_F$ is determined by whether 
$H_{F,w}$ is the parity-check matrix of an $(n_F,k_F;1)_q$ MDS 
code for each $w$.
\item[*]$\sigma_2:$ 
\textbf{Step (i):}
1) For each $i\in[m]$, the \emph{unchanged symbols} and 
\emph{read symbols} both consist of the first $k_{I_i}$ symbols, 
i.e., $\{\mathbf{c}_{i,j}: j\in[0,k_{I_i}-1]\}$;
2) The \emph{written symbols} are defined as 
$\hat{\mathbf{c}}=(\hat{\mathbf{c}}_0,\hat{\mathbf{c}}_1,\cdots,
\hat{\mathbf{c}}_{r_F-1})$, where $\hat{\mathbf{c}}_j=(\hat{c}_{0,j},
\cdots,\hat{c}_{\alpha-1,j})$, $j\in[0,r_F-1]$.

The final codeword is still given by~\eqref{eq: final_codeword_1}.

\textbf{Step (ii):}
For each $w\in[0,\alpha-1]$, write the parity-check matrix of 
$\mathcal{C}_{I_i}$ as $H_{i,w}=[\hat{H}_{i,w}\mid\bar{H}_{i,w}]$, 
where $\hat{H}_{i,w}$ is an $r_{I_i}\times k_{I_i}$ matrix and 
$\bar{H}_{i,w}$ is an $r_{I_i}\times r_{I_i}$ invertible matrix. 
Let $\hat{H}_{i,w}^E$ be an $r_F\times k_{I_i}$ matrix obtained 
from $\hat{H}_{i,w}$ by appending $r_F-r_{I_i}$ additional rows. 

The written symbols are computed as
\begin{equation}\label{eqn conA_rF_gt}\small
\begin{split}
(\hat{{c}}_{w,0},\hat{{c}}_{w,1},&\cdots,\hat{{c}}_{w,r_F-1})^\top =\\
&- \hat{H}_w^{-1} \sum_{i=1}^{m} \hat{H}_{i,w}^E \cdot 
({c}^{(i)}_{w,0},\cdots,{c}^{(i)}_{w,k_{I_i}-1})^{\top},
\end{split}
\end{equation}
where $\hat{H}_w$ is an $r_F\times r_F$ invertible matrix.

\textbf{Step (iii):}
With the written symbols computed as in~\eqref{eqn conA_rF_gt} and the final codeword \eqref{eq: final_codeword_1}, 
the final code $\mathcal{C}_F$ has parity-check matrix
\begin{equation*}
H_{F,w}:=
[\hat{H}_{1,w}^{E}| \cdots| \hat{H}_{m,w}^{E}| \hat{H}_w],
\quad w\in[0,\alpha-1].
\end{equation*}
The MDS property of $\mathcal{C}_F$ is determined by whether 
$H_{F,w}$ is the parity-check matrix of an $(n_F,k_F;1)_q$ MDS 
code for each $w$, as specified in Lemma~\ref{lem: mds_condition}.

\end{itemize}

The above framework enables row-wise conversion, where each row is treated as a scalar MDS code and stacked to form the array code.

The next lemma follows directly from Theorem~4 in~\cite{ge2024MDS} via row-wise application. Specifically, for each fixed subsymbol index $w$, the setting is scalar, and thus the corresponding scalar MDS and scalar access-optimal results apply. Hence, if each row code is MDS, then stacking all rows yields the array-code MDS property. Similarly, if each row conversion is scalar access-optimal, then summing over all rows yields array-level access optimality, since access cost is measured in symbols. Therefore, the proofs are omitted for brevity.

\begin{lem}\label{lem: mds_condition}
By the above conversion framework, for each $w\in[0,\alpha-1]$, denote the parity-check matrix of the final code by
\[
H_{F,w}:=
\begin{cases}
[\hat{H}_{1,w}^{(\mathcal{P}_1)}| \cdots| \hat{H}_{m,w}^{(\mathcal{P}_m)}| \hat{H}_w], & \text{for}\, \sigma_1,\\
[\hat{H}^E_{1,w}| \cdots| \hat{H}^E_{m,w}| \hat{H}_w], & \text{for} \,\sigma_2.
\end{cases}
\]

If $H_{F,w}$ is the parity-check matrix of an $(n_F,k_F;1)_q$ MDS code, then the above array code $\mathscr{C}$ is an MDS convertible code.
Furthermore,  $\mathscr{C}$ achieves optimal access cost.
\end{lem}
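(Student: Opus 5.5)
The proof works one subsymbol row at a time, since every object in the framework is defined row-wise. Fix $w\in[0,\alpha-1]$. The argument has three parts: verify that each row of the final codeword satisfies $H_{F,w}$, deduce the MDS property from the hypothesis on $H_{F,w}$, and then count accessed symbols and compare with Theorem~\ref{bound on access}.

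\textbf{Parity checks for $\sigma_1$.} Write the $w$-th row of $\mathbf{c}_F$ as in \eqref{eq: final_codeword_1}. Apply $H_{F,w}=[\hat{H}_{1,w}^{(\mathcal{P}_1)}|\cdots|\hat{H}_{m,w}^{(\mathcal{P}_m)}|\hat{H}_w]$ to this row. By \eqref{eq: pc_ic}, each block $\hat{H}_{i,w}^{(\mathcal{P}_i)}(c^{(i)}_{w,0},\dots,c^{(i)}_{w,k_{I_i}-1})^\top$ equals $-\bar{H}_{i,w}^{(\mathcal{P}_i)}(c^{(i)}_{w,k_{I_i}},\dots)^\top$. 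This uses that the punctured row code on $\mathcal{P}_i$ is a scalar MDS code with the indicated parity-check matrix, under the implicit rescaling convention. By \eqref{eqn conA}, the last block contributes $\hat{H}_w\hat{H}_w^{-1}\sum_i \bar{H}_{i,w}^{(\mathcal{P}_i)}(\cdots)^\top$. The two contributions cancel, so the product is zero.

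\textbf{Parity checks for $\sigma_2$.} Here $\hat{H}_w$ times \eqref{eqn conA_rF_gt} directly cancels $\sum_i\hat{H}^E_{i,w}(c^{(i)}_{w,0},\dots)^\top$. Hence every row of every $\mathbf{c}_F$ lies in the kernel of $H_{F,w}$.

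\textbf{MDS property and injectivity.} By hypothesis $H_{F,w}$ is the $r_F\times n_F$ parity-check matrix of an $(n_F,k_F;1)_q$ MDS code. Its kernel therefore has dimension $k_F$, and any $k_F$ coordinates determine the row. The map $\sigma$ copies the $k_F$ systematic coordinates $\mathbf{c}_i|_{[0,k_{I_i}-1]}$ unchanged. Since each initial code is MDS, these systematic coordinates range freely over $\mathbb{F}_q^{k_{I_i}\alpha}$ and determine $\mathbf{c}_i$. So $\sigma$ is injective, and its image is exactly the stacked kernel, which has dimension $k_F\alpha$. Stacking over $w$, any $k_F$ nodes of the final code carry all $k_F$ coordinates of every row. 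Thus $\mathcal{C}_F$ is an $(n_F,k_F;\alpha)_q$ MDS array code, and $\mathscr{C}$ is an MDS convertible code.

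\textbf{Access cost.} Since access is counted in whole vector symbols, the cost is read off from Step (i). Under $\sigma_1$, $\rho_R=\sum_i r_F$ and $\rho_W=r_F$. Under $\sigma_2$, $\rho_R=\sum_i k_{I_i}$ and $\rho_W=r_F$. These match the bound in Theorem~\ref{bound on access}, whose regime split is exactly Regime I versus Regime II. If one wants the mixed case where only some $i$ satisfy $r_F\le\min\{k_{I_i},r_{I_i}\}$, the same computation applied per $i$ gives the matching sum.

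\textbf{Main obstacle.} The delicate step is justifying \eqref{eq: pc_ic}: that the punctured code's parity-check matrix has the block form \eqref{eq: punctured_pc} with $\bar{H}_{i,w}^{(\mathcal{P}_i)}$ invertible, and that it is consistent with the rescaling normalization. I would handle this by invoking the MDS property of each row code: any $r_F$ columns of the parity-check matrix of an MDS code are independent. I would also note that the lower bound of Theorem~\ref{bound on access} for MDS array codes follows from the scalar bound of \cite{ge2024MDS} applied to any single row, exactly as the paper indicates.
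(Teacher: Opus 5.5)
Your proposal is correct and takes the same row-by-row reduction to the scalar setting as the paper. The paper then cites Theorem~4 of \cite{ge2024MDS}, whereas you carry out that scalar argument explicitly: the parity-check cancellation for $\sigma_1$ and $\sigma_2$, injectivity with the dimension count $k_F\alpha$, and matching $\rho=(m+1)r_F$ resp.\ $\rho=r_F+\sum_i k_{I_i}$ against Theorem~\ref{bound on access}. One minor caveat: your side remark that the array-level lower bound follows from the scalar bound applied to a single row only holds for row-wise conversions, so you should instead rely on Theorem~\ref{bound on access} as stated for arbitrary MDS convertible codes.
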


\begin{remark}
When $\alpha=1$, the above conversion framework reduces to the scalar setting and is consistent with~\cite{ge2024MDS}.
\end{remark}

\section{Array MSR-to-MSR Conversion}\label{sec: array}

In this section, we consider $(m,1)_q$ convertible MSR array 
codes. We first address the irregular setting, where the initial 
codes may have different parameters. Building on this, we further consider the practically 
motivated same-code setting, where all initial codewords are 
drawn from the same MSR code.

\subsection{Irregular Constructions}

We begin with the irregular setting, where the $m$ initial codes 
$\mathcal{C}_{I_i}$ for $i\in[m]$ are Hadamard-based MSR codes, which may have different parameters $(n_{I_i},k_{I_i};\alpha)$. Building on the conversion framework established in Section~\ref{sec: framework}, we construct $(m,1)_q$ convertible MSR codes achieving optimal access cost.

The main challenge arises from a structural mismatch between MSR repair and code conversion. In a Hadamard MSR code, the coefficient 
at node $u$ depends on $w$ only through the $u$-th coordinate 
$w_{(u)}$, which is the key property enabling interference 
alignment during repair. However, when multiple Hadamard MSR codes 
are merged directly via conversion, the resulting final code does 
not inherit this property: different nodes in the final code end 
up sharing the same coordinate dependence, violating the 
coordinate-wise periodicity required for MSR repair. 

To address this issue, we design the initial codes with a subsymbol-level structure such that the required periodicity is preserved under conversion. More precisely, we observe that the periodicity required by the final code can be embedded into the initial codes at the design stage, by arranging their subsymbol-level structure according to the parameters of the final 
code. As we will show, this pre-alignment does not compromise the MSR repair properties of the initial codes, and ensures that the required periodic structure is automatically inherited by the final code after conversion. An example is given below.

\begin{ex}\label{ex: motivating}
We use the parameters $m=2$, $n_I=4$, $k_I=2$, $d_I=3$, $d_F=5$, 
$r_F=2$. Consider two $(4,2;2^4)_q$ Hadamard 
MSR codes $\mathcal{C}_{I_1}$ and $\mathcal{C}_{I_2}$ with 
$s_I=d_I-k_I+1=2$. By~\eqref{eq: had_pc_row} 
and~\eqref{eq: had_row_points_simple}, the parity-check equations 
of $\mathcal{C}_{I_i}$ are given by 
\begin{equation*}
\sum_{u=0}^{3}(\lambda^{(i)}_{w_{(u)},\,u})^{\,t}\,c^{(i)}_{w,u}=0,
\quad t\in[0,1],\;w\in[0,15],\;i\in[2].
\end{equation*}

If the two initial codes are merged directly using the conversion 
framework, the coordinate-wise dependence required for the Hadamard 
structure is not preserved. More precisely, by our conversion 
framework, the parity-check equations of the final code are
\begin{equation*}\small
\begin{split}
&\sum_{u=0}^{1}(\lambda^{(1)}_{w_{(u)},\,u})^{\,t}\,c_{w,u}
+\sum_{u=2}^{3}(\lambda^{(2)}_{w_{(u-2)},\,u-2})^{\,t}\,c_{w,u}\\
&+\sum_{u=4}^{5}\hat{\lambda}_{w_{(u)},\,u}^{\,t}\,c_{w,u}=0,
\quad t\in[0,1],\;w\in[0,15].
\end{split}
\end{equation*}
Recall that in a Hadamard MSR code, the coefficient at node $u$ 
depends on $w$ only through the $u$-th coordinate $w_{(u)}$. 
However, after merging, the coefficients at nodes $0$ and $2$ 
both depend on $w_{(0)}$, and similarly nodes $1$ and $3$ both 
depend only on $w_{(1)}$. Hence different nodes share the same 
periodic pattern, violating the coordinate-wise periodicity 
required for MSR repair.

To resolve this, we increase the subpacketization to $\alpha=2^6=64$ 
and pre-align the initial codes according to the parameters of the 
final code. Specifically, $\mathcal{C}_{I_1}$ retains its row 
coefficient structure with the extended subpacketization:
\begin{equation*}\small
\sum_{u=0}^{3}(\lambda^{(1)}_{w_{(u)},\,u})^{\,t}\,c^{(1)}_{w,u}=0,
\quad t\in[0,1],\;w\in[0,63],
\end{equation*}
while $\mathcal{C}_{I_2}$ is additionally re-indexed by replacing 
$w_{(u)}$ with $w_{(u+2)}$ in its row coefficients:
\begin{equation*}
\sum_{u=0}^{3}(\lambda^{(2)}_{w_{(u+2)},\,u})^{\,t}\,c^{(2)}_{w,u}=0,
\quad t\in[0,1],\;w\in[0,63].
\end{equation*} As we will show in the proof of Theorem~\ref{thm: msr}, this pre-alignment does not compromise 
the MSR repair properties of the initial codes. 

Then, define
\begin{equation*}
\begin{split}
&\lambda_{w_{(u)},u}:=\lambda^{(1)}_{w_{(u)},\,u},\quad u=0,1,\\
&\lambda_{w_{(u)},u}:=\lambda^{(2)}_{w_{(u)},\,u-2},\quad u=2,3,\\
&\lambda_{w_{(u)},u}:=\hat{\lambda}_{w_{(u)},\,u},\quad u=4,5,
\end{split}
\end{equation*}
where $\{\hat{\lambda}_{v,u}: v\in\{0,1\},\, u\in\{4,5\}\}$ are 
chosen to be distinct from $\{\lambda^{(1)}_{v,u}: v\in\{0,1\},\,
u\in\{0,1\}\}$ and $\{\lambda^{(2)}_{v,u}: v\in\{0,1\},\,
u\in\{0,1\}\}$, ensuring the MDS property of the final code. After applying the 
conversion framework with 
$\hat{H}_w=\mathrm{Vand}_{2}(\hat{\lambda}_{w_{(u)},u}:u\in[4,5])$, 
the final code has parity-check equations
\begin{equation*}
\sum_{u=0}^{5}\lambda_{w_{(u)},\,u}^{\,t}\,c_{w,u}=0,
\quad t\in[0,1],\;w\in[0,63],
\end{equation*}
where each coefficient depends on $w$ only through its $u$-th 
coordinate in the $2$-ary expansion, which is precisely the 
Hadamard MSR structure in~\eqref{eq: had_pc_row} 
and~\eqref{eq: had_row_points_simple}. 
\end{ex}

We now introduce the notation used in the construction. For $i\in[m]$, let $\mathcal{P}_i:=\{0,1,\cdots,k_{I_i}+r_F-1\}$. For any $w\in[0,s^L-1]$, recall that $w_{(u)}$ denotes the $u$-th coordinate in the $s$-ary expansion of $w$. Let $k_{I_i}<d_{I_i}\leq n_{I_i}-1$ and $k_F<d_F\leq n_F-1$ denote the numbers of helper nodes used for single-node repair in $\mathcal{C}_{I_i}$ and 
$\mathcal{C}_F$, respectively. In both codes, any single-node failure can be repaired with optimal repair bandwidth. According to the lower bound on access cost given in Theorem~\ref{bound on access}, we consider two parameter regimes: $r_F \leq \min\{k_{I_i}, r_{I_i}\}$ and $r_F > \min\{k_{I_i}, r_{I_i}\}$ for all $i\in[m]$.

In the following constructions, we only specify the initial codes, 
the conversion procedure, and the matrix $\hat{H}_w$ for the final code; 
the remaining components follow directly from the framework in 
Section~\ref{sec: framework}.
\begin{cons}\label{cons: array_rI>r_F}
Consider the regime $r_{F}\le\min\{k_{I_i},r_{I_i}\}$ for all $i\in[m]$.
\begin{itemize}
    \item {Parameters.} Define 
    \[\small s:=\mathrm{lcm}(d_{I_1}-k_{I_1}+1,\cdots,d_{I_m}-k_{I_m}+1,d_F-k_F+1)\]
    and $\alpha:=s^L,\, L=\max\{\sum_{i=1}^{j} k_{I_i}+r_{j}:j\in[m]\}.$
    Assume that $q> sL $.

    \item {Initial codes.} Let $\beta$ be a primitive element of $\fq$.
Define
\[\lambda_{v,u}:=\beta^{su+v},\quad v\in[0,s-1],\,u\in[0,L-1].
\]
    
    For any $i\in[m]$, the initial code 
    $\mathcal{C}_{I_i}$ is defined by the parity-check equations
    \begin{equation}\label{eq: pc_initial}\small
    \begin{split}
   \sum_{u=0}^{n_{I_i}-1}\lambda_{w_{(u+\sum_{j=1}^{i-1}k_{I_j})},\,u+\sum_{j=1}^{i-1}k_{I_j}}^{\,t}
\,c^{(i)}_{w,u}=& 0,\\ t \in [0,\,r_{I_i}-1],\,\,&
    w \in [0,\alpha-1].
    \end{split}
    \end{equation}

    \item {Conversion.} We apply conversion procedure $\sigma_1$, with
    \begin{equation}\label{eq: cons1_h}\small
    \hat{H}_w=\mathrm{Vand}_{r_F}(\lambda_{w_{(u)},\,u}:\,u\in[\sum_{i=1}^mk_{I_i},\sum_{i=1}^mk_{I_i}+r_F-1]).
    \end{equation}
\end{itemize}
\end{cons}

\begin{thm}\label{thm: msr}
Construction~\ref{cons: array_rI>r_F} yields an $(m,1)_q$ convertible MSR code with optimal access cost. For each $i\in[m]$, the initial code 
$\mathcal{C}_{I_i}$ is an $(n_{I_i},k_{I_i};\alpha)_q$ MSR code, and the final code $\mathcal{C}_F$ is an $(n_F,k_F;\alpha)_q$ MSR code, where $k_F=\sum_{i=1}^m k_{I_i}$.
\end{thm}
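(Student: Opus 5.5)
We prove the statement in three steps: the final code's parity-check structure, MSR repair of each initial code, and MSR repair of the final code, with access optimality then following from Lemma~\ref{lem: mds_condition}.

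\textbf{Step 1: the final parity-check matrix.} Write $K_i:=\sum_{j=1}^{i-1}k_{I_j}$ for the offset of the $i$-th initial code, and note $L\ge K_i+n_{I_i}$, so every index used in \eqref{eq: pc_initial} lies in $[0,L-1]$.

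Fix a row $w$. The row code of $\mathcal{C}_{I_i}$ is a GRS-type code with parity-check matrix $\mathrm{Vand}_{r_{I_i}}(\lambda_{w_{(u+K_i)},u+K_i}:u\in[0,n_{I_i}-1])$. Its evaluation points are distinct because $\lambda_{v,u}=\beta^{su+v}$ with $su+v<sL<q$. By the puncturing normalization of Section~\ref{sec: pre}, restricting to $\mathcal{P}_i$ (allowed since $r_F\le r_{I_i}$) gives $\mathrm{Vand}_{r_F}$ on the first $k_{I_i}+r_F$ points. Hence $\hat H^{(\mathcal{P}_i)}_{i,w}=\mathrm{Vand}_{r_F}(\lambda_{w_{(u)},u}:u\in[K_i,K_i+k_{I_i}-1])$.

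Combining these blocks with \eqref{eq: cons1_h}, the final matrix \eqref{eq: parity_sigma_1} becomes
\[
H_{F,w}=\mathrm{Vand}_{r_F}\bigl(\lambda_{w_{(u)},u}:u\in[0,n_F-1]\bigr).
\]
Its points are distinct because the index $u$ differs across columns. So each row is an $[n_F,k_F]$ GRS code, and Lemma~\ref{lem: mds_condition} gives both the MDS property and optimal access cost. We must also check that the rescaling implicit in puncturing is done node-locally and does not disturb row structure. It is a per-row diagonal scaling by nonzero constants, which does not affect the MDS property or repair, since repair schemes can absorb known scalars.

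\textbf{Step 2: repair of the initial codes.} Fix $i$, set $s_i:=d_{I_i}-k_{I_i}+1$, and note $s_i\mid s$. Code $\mathcal{C}_{I_i}$ is a Hadamard-type code in which node $u$'s coefficient depends only on the digit $w_{(u+K_i)}$, with $s$-ary digits and $s$ distinct values per node.

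To repair node $u^*$, write the digit $w_{(u^*+K_i)}\in[0,s-1]$ as $(x,y)$ with $x\in[0,s/s_i-1]$ and $y\in[0,s_i-1]$, i.e.\ digit $=x s_i+y$. Group rows that agree everywhere except in the component $y$ of that digit. Each group has $s_i$ rows. Within a group, the coefficients at every other node are constant, since they depend on other digits. The $s_i$ coefficients at node $u^*$ are distinct elements, all distinct from those at other nodes. Summing the $r_{I_i}$ parity equations over the group, exactly as in Section~\ref{sec: hadamard}, leaves $s_i$ unknowns from node $u^*$, one group-sum from each of the $d_{I_i}$ helpers, and $n_{I_i}-1-d_{I_i}$ non-helper group-sums.

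Together these form a GRS system with $n_{I_i}-1+s_i$ unknowns and $r_{I_i}$ equations. The number of unknowns to solve for is $s_i+(n_{I_i}-1-d_{I_i})=r_{I_i}$, which is solvable by the Vandermonde structure. This is the standard argument of~\cite{Ye2017Explicit}. Each helper then sends $\alpha/s_i$ symbols, which meets \eqref{eq: beta}.

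\textbf{Step 3: repair of the final code.} Apply the same argument to $H_{F,w}$ with $s_F:=d_F-k_F+1$, which also divides $s$. Every final node $u$ has its own digit $w_{(u)}$, so the coordinate-wise periodicity holds.

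The main obstacle is Step 2. We must verify rigorously that a Hadamard structure using base $s$ larger than $s_i$ still yields optimal repair, via the sub-digit grouping. We must also check that the extra digits of $w$ unused by $\mathcal{C}_{I_i}$ (for instance $u>K_i+n_{I_i}-1$), which merely replicate rows, cause no harm. For the first point, I would make the digit split explicit and verify that the summed system is invertible, using the fact that the full set of $\lambda$'s is distinct.
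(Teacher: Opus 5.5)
Your proposal is correct and follows essentially the same route as the paper. The blocks concatenate to $H_{F,w}=\mathrm{Vand}_{r_F}(\lambda_{w_{(u)},u}:u\in[0,n_F-1])$ with distinct points, so Lemma~\ref{lem: mds_condition} gives MDS and optimal access; your digit split $xs_i+y$ is exactly the paper's partition of $[0,s-1]$ into intervals $[as_{I_i},(a+1)s_{I_i}-1]$, which forms the repair sub-groups for both the initial and final codes (your treatment of the puncturing rescaling as a node-local invertible diagonal map, and of the $n_{I_i}-1-d_{I_i}$ non-helper unknowns when $d_{I_i}<n_{I_i}-1$, is more explicit than the paper, which leaves both implicit).
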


\begin{IEEEproof}
We prove that Construction~\ref{cons: array_rI>r_F} yields a 
convertible MSR code with optimal access cost by verifying the 
MSR property of both the initial codes and the final code.

\textbf{Initial codes are MSR.} 
For each $i\in[m]$, we show that 
the parity-check equations~\eqref{eq: pc_initial} define a 
Hadamard-based MSR code with parameters $(n_{I_i},k_{I_i};\alpha)_q$ 
by verifying the two structural conditions in 
Section~\ref{sec: hadamard}.

For the layered MDS property~\eqref{eq: had_pc_row}, the 
coefficients
\[
\lambda_{w_{(u+\sum_{j=1}^{i-1}k_{I_j})},\,u+\sum_{j=1}^{i-1}k_{I_j}},
\quad u\in[0,n_{I_i}-1]
\]
are distinct for each $w\in[0,\alpha-1]$, so each row forms an 
$(n_{I_i},k_{I_i};1)_q$ MDS code.
For the coordinate-wise periodicity~\eqref{eq: had_row_points_simple}, 
the row coefficient at position $u$ depends only on the 
$(u+\sum_{j=1}^{i-1}k_{I_j})$-th coordinate of $w$ in the $s$-ary 
expansion, which is precisely the structure required for 
interference alignment in MSR repair.

Since $s_{I_i}:=d_{I_i}-k_{I_i}+1$ divides $s$ by definition, 
we partition $[0,s-1]$ into $s/s_{I_i}$ disjoint intervals 
$[as_{I_i},\,(a+1)s_{I_i}-1]$ for $a\in[0,s/s_{I_i}-1]$. For 
each row index $w$, failed node $j^*\in[0,n_{I_i}-1]$, and 
interval $a$, define the sub-group
\[
\mathcal{G}^{(i)}_a(w,j^*):=\bigl\{R_w\bigl(j^*+{\textstyle
\sum_{l=1}^{i-1}}k_{I_l},\,as_{I_i}+b\bigr):b\in[0,s_{I_i}-1]\bigr\}.
\]
The sub-groups $\{\mathcal{G}^{(i)}_a(w,j^*):a\in[0,s/s_{I_i}-1],\,
w\in[0,\alpha-1]\}$ form a partition of $[0,\alpha-1]$. Within 
each sub-group, the coordinate-wise 
periodicity ensures that the 
coefficients at all helper nodes are invariant, and the repair 
procedure follows the same argument as in 
Section~\ref{sec: hadamard}(i) with $s_{I_i}$ in place of $s$ 
and $d_{I_i}$ in place of $d$. Running over all sub-groups gives 
repair bandwidth
$
b=\frac{\alpha}{s_{I_i}}=\frac{\alpha}{d_{I_i}-k_{I_i}+1},
$
meeting the cut-set bound in~\eqref{eq: beta}. Hence each 
$\mathcal{C}_{I_i}$ is an $(n_{I_i},k_{I_i};\alpha)_q$ MSR code 
with optimal repair bandwidth.

\textbf{Final code is MSR.} 
By \eqref{eq: parity_sigma_1} and \eqref{eq: cons1_h}, for each $w\in[0,\alpha-1]$, the 
parity-check matrix $H_{F,w}$ is a Vandermonde matrix whose evaluation points are distinct by construction. Hence $H_{F,w}$ 
is the parity-check matrix of an $(n_F,k_F;1)_q$ MDS code, and stacking over all rows yields an $(n_F,k_F;\alpha)_q$ MDS array code, establishing the layered MDS property~\eqref{eq: had_pc_row} 
for $\mathcal{C}_F$.

It remains to show that $\mathcal{C}_F$ satisfies the coordinate-wise periodicity required for MSR repair. By our conversion framework, the first $k_{I_i}$ nodes of $\mathcal{C}_{I_i}$ are unchanged. Therefore,
for $w\in[0,\alpha-1]$, we can define the final codeword as
\begin{equation*}\small
\begin{split}
&c_{w,\,j+\sum_{l=1}^{i-1}k_{I_l}}:=c^{(i)}_{w,j},
\quad i\in[m],\,j\in[0,k_{I_i}-1],\\
&c_{w,\,j+\sum_{i=1}^mk_{I_i}}:=\hat{c}_{w,j},
\quad j\in[0,r_F-1].
\end{split}
\end{equation*}
By the pre-alignment of the initial codes and the choice of 
$\hat{H}_w$ in~\eqref{eq: cons1_h}, the row coefficients of 
$\mathcal{C}_F$ satisfy
$
\alpha_{w,u} := \lambda_{w_{(u)},\,u}, \quad u\in[0,n_F-1].
$
Then the parity-check equations of $\mathcal{C}_F$ take the form
\begin{equation*}\small
\sum_{u=0}^{n_F-1}\alpha_{w,u}^{\,t}\,c_{w,u}=0,
\quad t\in[0,r_F-1],\;w\in[0,\alpha-1].
\end{equation*}
This is exactly the Hadamard MSR structure 
in~\eqref{eq: had_row_points_simple}, where each coefficient 
$\alpha_{w,u}=\lambda_{w_{(u)},u}$ depends on $w$ only through 
its $u$-th coordinate in the $s$-ary expansion. By the same 
argument as for the initial codes above, with $s_F:=d_F-k_F+1$ 
in place of $s_{I_i}$, $d_F$ in place of $d_{I_i}$, $\mathcal{C}_F$ is an $(n_F,k_F;\alpha)_q$ 
MSR code with $d_F$ helper nodes.

Therefore, Construction~\ref{cons: array_rI>r_F} yields an 
$(m,1)_q$ convertible MSR code, and optimal access cost follows 
from Lemma~\ref{lem: mds_condition}.
\end{IEEEproof}

	\begin{remark}[Construction 1A]\label{rmk: construction2}
For the regime
$
r_F>\min\{k_{I_i},r_{I_i}\},\, i\in[m],
$
the construction is identical to
Construction~\ref{cons: array_rI>r_F},
except for the sub-packetization level and the conversion
procedure. We omit the full statement for brevity.

Define
\begin{equation}\label{eq: sub-packetization_cons2}\small
\alpha=s^L,
\qquad
L=\sum_{i=1}^mk_{I_i}+r_F,
\end{equation}
and assume $q>sL$.
The initial codes are still given by
\eqref{eq: pc_initial}, with $\alpha$
defined by~\eqref{eq: sub-packetization_cons2}.

The final code is generated using conversion procedure
$\sigma_2$, where
\begin{equation}\label{eq: cons2_h}\small
\hat{H}_w
=
\mathrm{Vand}_{r_F}
(\lambda_{w_{(u)},u}:
u\in[\sum_{i=1}^mk_{I_i},
\sum_{i=1}^mk_{I_i}+r_F-1]).
\end{equation}

By the same argument as in
Theorem~\ref{thm: msr},
this yields an $(m,1)_q$ convertible MSR code
with optimal access cost.
\end{remark}

\subsection{Same-Code Constructions}

In this subsection, we construct $(m,1)_q$ convertible MSR array codes for the setting where the $m$ initial codewords are drawn from a common $(n_I,k_I;\alpha)_q$ MSR code $\mathcal{C}_I$. This setting is motivated by practical storage systems, which typically operate with a single uniformly deployed erasure code. When storage requirements change---due to data growth, node addition, or fault-tolerance adjustments---the system must convert multiple codewords of the same code into a new code. Concretely, the conversion procedure transforms $m$ codewords of $\mathcal{C}_I$ into a codeword of an $(n_F,k_F;\alpha)_q$ MSR code $\mathcal{C}_F$.

In the irregular setting, since the initial codes can be designed independently, the required periodic structure can be embedded at 
the design stage. However, when all initial codewords are drawn from the same code, the initial codes share a common structure that must be uniformly applied to all codewords. In this case, we introduce a row-matching technique that achieves the required alignment through a bijective row remapping in the conversion procedure itself, rather than at the design stage.

\subsubsection{Illustrative Examples}

The following example shows that the irregular construction in 
Section~\ref{sec: array} cannot produce two initial codewords from 
the same code, motivating the need for a different approach.

\begin{ex}\label{ex: challenges}
Consider $m=2$ with parameters $n_I=4$, $k_I=2$, $d_I=3$, $d_F=5$, $r_F=2$,
so that $s=\mathrm{lcm}(d_I-k_I+1, d_F-k_F+1)=2$ and $\alpha=2^6=64$. 
Let $\beta$ be a primitive element of $\mathbb{F}_q$. The evaluation 
points are $\lambda_{v,u}:=\beta^{2u+v}$ for $v\in\{0,1\}$ and 
$u\in[0,5]$. By Construction~\ref{cons: array_rI>r_F}, 
the two initial codes $\mathcal{C}_{I_1}$ and $\mathcal{C}_{I_2}$ 
have row coefficients
\begin{equation}\label{eq: c_1}
\lambda_{w_{(u)},\,u}=\beta^{2u+w_{(u)}},
\quad u\in[0,3],
\end{equation}
and
$
\lambda_{w_{(u+2)},\,u+2}=\beta^{2(u+2)+w_{(u+2)}},
\quad u\in[0,3],
$
respectively. Table~\ref{tab: ex_irregular} shows the row coefficients 
for the first few rows.

\begin{table}[H]
\centering
\caption{Row coefficients of $\mathcal{C}_{I_1}$ (left) and 
$\mathcal{C}_{I_2}$ (right) under Construction~\ref{cons: array_rI>r_F}.}
\label{tab: ex_irregular}
\vspace{-0.5cm}
\setlength{\tabcolsep}{1.5pt}
\renewcommand{\arraystretch}{0.01}
\begin{minipage}[t]{0.48\columnwidth}
\centering
\textbf{$\mathcal{C}_{I_1}$}\\[2pt]
\begin{tabular}{c|c|c|c|c}
\toprule
$w$ & node $0$ & node $1$ & node $2$ & node $3$ \\
\midrule
$0$ & {\color{red!70!black}$\beta^{0}$} & {\color{red!70!black}$\beta^{2}$} & {\color{red!70!black}$\beta^{4}$} & {\color{red!70!black}$\beta^{6}$}\\
$1$ & {\color{blue!70!black}$\beta^{1}$} & {\color{red!70!black}$\beta^{2}$} & {\color{red!70!black}$\beta^{4}$} & {\color{red!70!black}$\beta^{6}$} \\
$2$ & {\color{red!70!black}$\beta^{0}$} & {\color{blue!70!black}$\beta^{3}$} & {\color{red!70!black}$\beta^{4}$} & {\color{red!70!black}$\beta^{6}$} \\
$3$ & {\color{blue!70!black}$\beta^{1}$} & {\color{blue!70!black}$\beta^{3}$} & {\color{red!70!black}$\beta^{4}$} & {\color{red!70!black}$\beta^{6}$}\\
$4$ & {\color{red!70!black}$\beta^{0}$} & {\color{red!70!black}$\beta^{2}$} & {\color{blue!70!black}$\beta^{5}$} & {\color{red!70!black}$\beta^{6}$} \\
$\vdots$ & $\vdots$ & $\vdots$ & $\vdots$ & $\vdots$ \\
$8$ & {\color{red!70!black}$\beta^{0}$} & {\color{red!70!black}$\beta^{2}$} & {\color{red!70!black}$\beta^{4}$} & {\color{blue!70!black}$\beta^{7}$}\\
$\vdots$ & $\vdots$ & $\vdots$ & $\vdots$ & $\vdots$ \\
$16$ & {\color{red!70!black}$\beta^{0}$} & {\color{red!70!black}$\beta^{2}$} & {\color{red!70!black}$\beta^{4}$} & {\color{red!70!black}$\beta^{6}$}\\
$\vdots$ & $\vdots$ & $\vdots$ & $\vdots$ & $\vdots$ \\
$32$ & {\color{red!70!black}$\beta^{0}$} & {\color{red!70!black}$\beta^{2}$} & {\color{red!70!black}$\beta^{4}$} & {\color{red!70!black}$\beta^{6}$}\\
$\vdots$ & $\vdots$ & $\vdots$ & $\vdots$ & $\vdots$ \\
\bottomrule
\end{tabular}
\end{minipage}\hfill
\begin{minipage}[t]{0.08\columnwidth}
\centering
\vspace{2pt}
\end{minipage}\hfill
\begin{minipage}[t]{0.45\columnwidth}
\centering
\textbf{$\mathcal{C}_{I_2}$}\\[2pt]
\begin{tabular}{c|c|c|c}
\toprule
 node $0$ & node $1$ & node $2$ & node $3$ \\
\midrule
 {\color{red!70!black}$\beta^{4}$} & {\color{red!70!black}$\beta^{6}$} & {\color{red!70!black}$\beta^{8}$} & {\color{red!70!black}$\beta^{10}$} \\
 {\color{red!70!black}$\beta^{4}$} & {\color{red!70!black}$\beta^{6}$} & {\color{red!70!black}$\beta^{8}$} & {\color{red!70!black}$\beta^{10}$} \\
 {\color{red!70!black}$\beta^{4}$} & {\color{red!70!black}$\beta^{6}$} & {\color{red!70!black}$\beta^{8}$} & {\color{red!70!black}$\beta^{10}$} \\
 {\color{red!70!black}$\beta^{4}$} & {\color{red!70!black}$\beta^{6}$} & {\color{red!70!black}$\beta^{8}$} & {\color{red!70!black}$\beta^{10}$} \\
 {\color{blue!70!black}$\beta^{5}$} & {\color{red!70!black}$\beta^{6}$} & {\color{red!70!black}$\beta^{8}$} & {\color{red!70!black}$\beta^{10}$} \\
  $\vdots$ & $\vdots$ & $\vdots$ & $\vdots$ \\
 {\color{red!70!black}$\beta^{4}$} & {\color{blue!70!black}$\beta^{7}$} & {\color{red!70!black}$\beta^{8}$} & {\color{red!70!black}$\beta^{10}$} \\
 $\vdots$ & $\vdots$ & $\vdots$ & $\vdots$ \\
 {\color{red!70!black}$\beta^{4}$} & {\color{red!70!black}$\beta^{6}$} & {\color{blue!70!black}$\beta^{9}$} & {\color{red!70!black}$\beta^{10}$} \\
 $\vdots$ & $\vdots$ & $\vdots$ & $\vdots$ \\
 {\color{red!70!black}$\beta^{4}$} & {\color{red!70!black}$\beta^{6}$} & {\color{red!70!black}$\beta^{8}$} & {\color{blue!70!black}$\beta^{11}$} \\
 $\vdots$ & $\vdots$ & $\vdots$ & $\vdots$ \\
\bottomrule
\end{tabular}
\end{minipage}
\end{table}

We next explain why $\mathcal{C}_{I_1}$ and $\mathcal{C}_{I_2}$ 
cannot be chosen from the same MSR code under 
Construction~\ref{cons: array_rI>r_F}. If they were 
generated from the same MSR code, then each row of coefficients 
in $\mathcal{C}_{I_2}$ would have to be a nonzero scalar multiple 
of the corresponding row in $\mathcal{C}_{I_1}$, i.e., for each 
$w\in[0,63]$, there would exist $\delta_w\in\mathbb{F}_q^*$ such that
\[
\lambda_{w_{(u+2)},\,u+2}=\delta_w\lambda_{w_{(u)},\,u},
\quad \text{for all } u\in[0,3].
\]

However, this already fails for $w=1$. From node $u=0$ in row 
$w=1$ (where $w_{(0)}=1$ and $w_{(2)}=0$):
\[
\delta_1=\frac{\lambda_{0,\,2}}{\lambda_{1,\,0}}=
\frac{\beta^{4}}{\beta^{1}}=\beta^{3}.\]
From node $u=1$ in the same row (where $w_{(1)}=0$ and $w_{(3)}=0$):
\[
\delta_1=\frac{\lambda_{0,\,3}}{\lambda_{0,\,1}}=
\frac{\beta^{6}}{\beta^{2}}=\beta^{4},
\]
which contradicts $\delta_1=\beta^3$. Hence $\mathcal{C}_{I_1}$ 
and $\mathcal{C}_{I_2}$ cannot be the same MSR code under 
Construction~\ref{cons: array_rI>r_F}.
\end{ex}

To resolve this issue, we introduce the row-matching technique, 
which applies a bijective row remapping $\pi_i$ to each initial 
codeword before conversion. The key idea is to pair rows from the 
same MSR code across the initial codewords so that, after 
conversion, the resulting final code recovers the periodic 
structure required for MSR repair. The following example 
illustrates this approach.

\begin{ex}\label{ex: solution}
We use the same parameters as in Example~\ref{ex: challenges}. 
Both initial codewords are drawn from the same $(4,2;2^6)_q$ 
Hadamard MSR code $\mathcal{C}_I$. We take $\mathcal{C}_{I_1}$ 
to be the same as in Example~\ref{ex: challenges}, with row 
coefficients given by~\eqref{eq: c_1}. The second codeword is 
drawn from $\mathcal{C}'_{I_2}$, with row coefficients
\[
\lambda'_{w_{(u)},\,u}=\beta^4\cdot\lambda_{w_{(u)},\,u}
=\beta^{2u+w_{(u)}+4},\quad u\in[0,3].
\]

Since scaling all evaluation points by $\beta^4$ does not change 
the solution set of the parity-check equations, $\mathcal{C}_{I_1}$ 
and $\mathcal{C}'_{I_2}$ represent the same MSR code $\mathcal{C}_I$. 
Table~\ref{tab: ex_same_code} shows the row coefficients of the 
two initial codewords.

\begin{table}[htb]\footnotesize
\centering
\caption{Row coefficients of $\mathcal{C}_{I_1}$ (left) and 
$\mathcal{C}'_{I_2}$ (right).}
\label{tab: ex_same_code}
\vspace{-0.4cm}
\setlength{\tabcolsep}{1.5pt}
\renewcommand{\arraystretch}{0.01}
\begin{minipage}[t]{0.48\columnwidth}
\centering
\textbf{$\mathcal{C}_{I_1}$}\\[2pt]
\begin{tabular}{c|c|c|c|c}
\toprule
$w$ & node $0$ & node $1$ & node $2$ & node $3$ \\
\midrule
$0$ & {\color{red!70!black}$\beta^{0}$} & {\color{red!70!black}$\beta^{2}$} & {\color{red!70!black}$\beta^{4}$} & {\color{red!70!black}$\beta^{6}$}\\
$1$ & {\color{blue!70!black}$\beta^{1}$} & {\color{red!70!black}$\beta^{2}$} & {\color{red!70!black}$\beta^{4}$} & {\color{red!70!black}$\beta^{6}$} \\
$2$ & {\color{red!70!black}$\beta^{0}$} & {\color{blue!70!black}$\beta^{3}$} & {\color{red!70!black}$\beta^{4}$} & {\color{red!70!black}$\beta^{6}$} \\
$3$ & {\color{blue!70!black}$\beta^{1}$} & {\color{blue!70!black}$\beta^{3}$} & {\color{red!70!black}$\beta^{4}$} & {\color{red!70!black}$\beta^{6}$}\\
$4$ & {\color{red!70!black}$\beta^{0}$} & {\color{red!70!black}$\beta^{2}$} & {\color{blue!70!black}$\beta^{5}$} & {\color{red!70!black}$\beta^{6}$} \\
$\vdots$ & $\vdots$ & $\vdots$ & $\vdots$ & $\vdots$ \\
$8$ & {\color{red!70!black}$\beta^{0}$} & {\color{red!70!black}$\beta^{2}$} & {\color{red!70!black}$\beta^{4}$} & {\color{blue!70!black}$\beta^{7}$}\\
$\vdots$ & $\vdots$ & $\vdots$ & $\vdots$ & $\vdots$ \\
$16$ & {\color{red!70!black}$\beta^{0}$} & {\color{red!70!black}$\beta^{2}$} & {\color{red!70!black}$\beta^{4}$} & {\color{red!70!black}$\beta^{6}$}\\
$\vdots$ & $\vdots$ & $\vdots$ & $\vdots$ & $\vdots$ \\
$32$ & {\color{red!70!black}$\beta^{0}$} & {\color{red!70!black}$\beta^{2}$} & {\color{red!70!black}$\beta^{4}$} & {\color{red!70!black}$\beta^{6}$}\\
$\vdots$ & $\vdots$ & $\vdots$ & $\vdots$ & $\vdots$ \\
\bottomrule
\end{tabular}
\end{minipage}\hfill
\begin{minipage}[t]{0.45\columnwidth}
\centering
\textbf{$\mathcal{C}'_{I_2}$}\\[0.6pt]
\begin{tabular}{c|c|c|c|c}
\toprule
$w$ & node $0$ & node $1$ & node $2$ & node $3$ \\
\midrule
$0$ & {\color{red!70!black}$\beta^{4}$} & {\color{red!70!black}$\beta^{6}$} & {\color{red!70!black}$\beta^{8}$} & {\color{red!70!black}$\beta^{10}$} \\
$1$ & {\color{blue!70!black}$\beta^{5}$} & {\color{red!70!black}$\beta^{6}$} & {\color{red!70!black}$\beta^{8}$} & {\color{red!70!black}$\beta^{10}$} \\
$2$ & {\color{red!70!black}$\beta^{4}$} & {\color{blue!70!black}$\beta^{7}$} & {\color{red!70!black}$\beta^{8}$} & {\color{red!70!black}$\beta^{10}$} \\
$3$ & {\color{blue!70!black}$\beta^{5}$} & {\color{blue!70!black}$\beta^{7}$} & {\color{red!70!black}$\beta^{8}$} & {\color{red!70!black}$\beta^{10}$} \\
$4$ & {\color{red!70!black}$\beta^{4}$} & {\color{red!70!black}$\beta^{6}$} & {\color{blue!70!black}$\beta^{9}$} & {\color{red!70!black}$\beta^{10}$} \\
$\vdots$ & $\vdots$ & $\vdots$ & $\vdots$ & $\vdots$ \\
$8$ & {\color{red!70!black}$\beta^{4}$} & {\color{red!70!black}$\beta^{6}$} & {\color{red!70!black}$\beta^{8}$} & {\color{blue!70!black}$\beta^{11}$} \\
$\vdots$ & $\vdots$ & $\vdots$ & $\vdots$ & $\vdots$ \\
$16$ & {\color{red!70!black}$\beta^{4}$} & {\color{red!70!black}$\beta^{6}$} & {\color{red!70!black}$\beta^{8}$} & {\color{red!70!black}$\beta^{10}$} \\
$\vdots$ & $\vdots$ & $\vdots$ & $\vdots$ & $\vdots$ \\
$32$ & {\color{red!70!black}$\beta^{4}$} & {\color{red!70!black}$\beta^{6}$} & {\color{red!70!black}$\beta^{8}$} & {\color{red!70!black}$\beta^{10}$} \\
$\vdots$ & $\vdots$ & $\vdots$ & $\vdots$ & $\vdots$ \\
\bottomrule
\end{tabular}
\end{minipage}
\end{table}

To preserve the coordinate-wise periodic structure in the final 
code, the rows of the two initial codewords are paired as follows. 
The rows of $\mathcal{C}_{I_1}$ are kept unchanged, while the 
rows of $\mathcal{C}'_{I_2}$ are cyclically shifted by $k_I=2$ 
positions in their $2$-ary expansion before merging. Specifically, 
for a given row index $w$ of the final code, the unchanged symbols 
of $\mathcal{C}_{I_1}$ are taken from row $w$, while those of 
$\mathcal{C}'_{I_2}$ are taken from the row whose $2$-ary 
coordinates are shifted by $2$ positions, i.e., from row 
$w'=(w_{(2)},w_{(3)},w_{(4)},w_{(5)},w_{(0)},w_{(1)})$.

As a result, the row coefficient at node 
$u\in[0,3]$ of $\mathcal{C}'_{I_2}$, which originally depends 
on $w_{(u)}$, now depends on $w_{(u+2)}$. More precisely, the 
coefficient at node $u$ under the shifted row $w'$ satisfies
\[\small
\lambda'_{w'_{(u)},\,u}=\beta^{2u+w'_{(u)}+4}
=\beta^{2u+w_{(u+2)}+4},\quad u\in[0,3],
\]
so that the dependence on $w$ shifts from coordinate $u$ to 
coordinate $u+2$ for all $u\in[0,3]$. 
Table~\ref{tab: ex_matching} shows the row coefficients of the 
two initial codewords after applying the cyclic shift.

\begin{table}[H]
\centering
\caption{Row coefficients of $\mathcal{C}_{I_1}$ (left) and 
 $\mathcal{C}'_{I_2}$ (right) after  cyclic shift.}
 \vspace{-0.5cm}
\label{tab: ex_matching}
\setlength{\tabcolsep}{1.1pt}
\renewcommand{\arraystretch}{0.01}
\begin{minipage}[t]{0.4\columnwidth}
\centering
\textbf{$\mathcal{C}_{I_1}$, row $w$}\\[2pt]
\begin{tabular}{c|c|c|c|c}
\toprule
$w$ & node $0$ & node $1$ & node $2$ & node $3$ \\
\midrule
$0$ & {\color{red!70!black}$\beta^{0}$} & {\color{red!70!black}$\beta^{2}$} & {\color{red!70!black}$\beta^{4}$} & {\color{red!70!black}$\beta^{6}$}\\
$1$ & {\color{blue!70!black}$\beta^{1}$} & {\color{red!70!black}$\beta^{2}$} & {\color{red!70!black}$\beta^{4}$} & {\color{red!70!black}$\beta^{6}$} \\
$2$ & {\color{red!70!black}$\beta^{0}$} & {\color{blue!70!black}$\beta^{3}$} & {\color{red!70!black}$\beta^{4}$} & {\color{red!70!black}$\beta^{6}$} \\
$3$ & {\color{blue!70!black}$\beta^{1}$} & {\color{blue!70!black}$\beta^{3}$} & {\color{red!70!black}$\beta^{4}$} & {\color{red!70!black}$\beta^{6}$}\\
$4$ & {\color{red!70!black}$\beta^{0}$} & {\color{red!70!black}$\beta^{2}$} & {\color{blue!70!black}$\beta^{5}$} & {\color{red!70!black}$\beta^{6}$} \\
$\vdots$ & $\vdots$ & $\vdots$ & $\vdots$ & $\vdots$ \\
$8$ & {\color{red!70!black}$\beta^{0}$} & {\color{red!70!black}$\beta^{2}$} & {\color{red!70!black}$\beta^{4}$} & {\color{blue!70!black}$\beta^{7}$}\\
$\vdots$ & $\vdots$ & $\vdots$ & $\vdots$ & $\vdots$ \\
$16$ & {\color{red!70!black}$\beta^{0}$} & {\color{red!70!black}$\beta^{2}$} & {\color{red!70!black}$\beta^{4}$} & {\color{red!70!black}$\beta^{6}$}\\
$\vdots$ & $\vdots$ & $\vdots$ & $\vdots$ & $\vdots$ \\
$32$ & {\color{red!70!black}$\beta^{0}$} & {\color{red!70!black}$\beta^{2}$} & {\color{red!70!black}$\beta^{4}$} & {\color{red!70!black}$\beta^{6}$}\\
$\vdots$ & $\vdots$ & $\vdots$ & $\vdots$ & $\vdots$ \\
\bottomrule
\end{tabular}
\end{minipage}\hfill
\begin{minipage}[t]{0.5\columnwidth}
\centering
\textbf{$\mathcal{C}'_{I_2}$, row $w'$}\\[0.2pt]
\begin{tabular}{c|c|c|c|c}
\toprule
$w'$ & node $0$ & node $1$ & node $2$ & node $3$ \\
\midrule
$0$ & {\color{red!70!black}$\beta^{4}$} & {\color{red!70!black}$\beta^{6}$} & {\color{red!70!black}$\beta^{8}$} & {\color{red!70!black}$\beta^{10}$} \\
$16$ & {\color{red!70!black}$\beta^{4}$} & {\color{red!70!black}$\beta^{6}$} & {\color{red!70!black}$\beta^{8}$} & {\color{red!70!black}$\beta^{10}$} \\
$32$ & {\color{red!70!black}$\beta^{4}$} & {\color{red!70!black}$\beta^{6}$} & {\color{red!70!black}$\beta^{8}$} & {\color{red!70!black}$\beta^{10}$} \\
$48$ & {\color{red!70!black}$\beta^{4}$} & {\color{red!70!black}$\beta^{6}$} & {\color{red!70!black}$\beta^{8}$} & {\color{red!70!black}$\beta^{10}$} \\
$1$ & {\color{blue!70!black}$\beta^{5}$} & {\color{red!70!black}$\beta^{6}$} & {\color{red!70!black}$\beta^{8}$} & {\color{red!70!black}$\beta^{10}$} \\
$\vdots$ & $\vdots$ & $\vdots$ & $\vdots$ & $\vdots$ \\
$2$ & {\color{red!70!black}$\beta^{4}$} & {\color{blue!70!black}$\beta^{7}$} & {\color{red!70!black}$\beta^{8}$} & {\color{red!70!black}$\beta^{10}$} \\
$\vdots$ & $\vdots$ & $\vdots$ & $\vdots$ & $\vdots$ \\
$4$ & {\color{red!70!black}$\beta^{4}$} & {\color{red!70!black}$\beta^{6}$} & {\color{blue!70!black}$\beta^{9}$} & {\color{red!70!black}$\beta^{10}$} \\
$\vdots$ & $\vdots$ & $\vdots$ & $\vdots$ & $\vdots$ \\
$8$ & {\color{red!70!black}$\beta^{4}$} & {\color{red!70!black}$\beta^{6}$} & {\color{red!70!black}$\beta^{8}$} & {\color{blue!70!black}$\beta^{11}$} \\
$\vdots$ & $\vdots$ & $\vdots$ & $\vdots$ & $\vdots$ \\
\bottomrule
\end{tabular}
\end{minipage}
\end{table}

Observe that in the row-permuted $\mathcal{C}'_{I_2}$, the 
coefficient at  unchanged node $0$ now depends on $w_{(2)}$ rather than 
$w_{(0)}$, and the coefficient at unchanged node $1$ depends on $w_{(3)}$ 
rather than $w_{(1)}$. After applying the conversion framework, 
the final code has row coefficients as shown in 
Table~\ref{tab: ex_solution}.

\begin{table}[htb]
\centering
\caption{Row coefficients of the final code after row-matching 
conversion.}
\vspace{-0.25cm}
\setlength{\tabcolsep}{3pt}
\renewcommand{\arraystretch}{0.001} 
\label{tab: ex_solution}
\begin{tabular}{c|c|c|c|c|c|c}
\toprule
$w$ & node $0$ & node $1$ & node $2$ & node $3$ & node $4$ & node $5$\\
\midrule
$0$ & {\color{red!70!black}$\beta^{0}$} & {\color{red!70!black}$\beta^{2}$} & 
{\color{red!70!black}$\beta^{4}$} & {\color{red!70!black}$\beta^{6}$} & 
{\color{red!70!black}$\beta^{8}$} & {\color{red!70!black}$\beta^{10}$} \\
$1$ & {\color{blue!70!black}$\beta^{1}$} & {\color{red!70!black}$\beta^{2}$} & 
{\color{red!70!black}$\beta^{4}$} & {\color{red!70!black}$\beta^{6}$} & 
{\color{red!70!black}$\beta^{8}$} & {\color{red!70!black}$\beta^{10}$} \\
$2$ & {\color{red!70!black}$\beta^{0}$} & {\color{blue!70!black}$\beta^{3}$} & 
{\color{red!70!black}$\beta^{4}$} & {\color{red!70!black}$\beta^{6}$} & 
{\color{red!70!black}$\beta^{8}$} & {\color{red!70!black}$\beta^{10}$} \\
$3$ & {\color{blue!70!black}$\beta^{1}$} & {\color{blue!70!black}$\beta^{3}$} & 
{\color{red!70!black}$\beta^{4}$} & {\color{red!70!black}$\beta^{6}$} & 
{\color{red!70!black}$\beta^{8}$} & {\color{red!70!black}$\beta^{10}$} \\
$4$ & {\color{red!70!black}$\beta^{0}$} & {\color{red!70!black}$\beta^{2}$} & 
{\color{blue!70!black}$\beta^{5}$} & {\color{red!70!black}$\beta^{6}$} & 
{\color{red!70!black}$\beta^{8}$} & {\color{red!70!black}$\beta^{10}$} \\
$\vdots$ & $\vdots$ & $\vdots$ & $\vdots$ & $\vdots$ & $\vdots$ & $\vdots$\\
$8$ & {\color{red!70!black}$\beta^{0}$} & {\color{red!70!black}$\beta^{2}$} & 
{\color{red!70!black}$\beta^{4}$} & {\color{blue!70!black}$\beta^{7}$} & 
{\color{red!70!black}$\beta^{8}$} & {\color{red!70!black}$\beta^{10}$} \\
$\vdots$ & $\vdots$ & $\vdots$ & $\vdots$ & $\vdots$ & $\vdots$ & $\vdots$\\
$16$ & {\color{red!70!black}$\beta^{0}$} & {\color{red!70!black}$\beta^{2}$} & 
{\color{red!70!black}$\beta^{4}$} & {\color{red!70!black}$\beta^{6}$} & 
{\color{blue!70!black}$\beta^{9}$} & {\color{red!70!black}$\beta^{10}$} \\
$\vdots$ & $\vdots$ & $\vdots$ & $\vdots$ & $\vdots$ & $\vdots$ & $\vdots$\\
$32$ & {\color{red!70!black}$\beta^{0}$} & {\color{red!70!black}$\beta^{2}$} & 
{\color{red!70!black}$\beta^{4}$} & {\color{red!70!black}$\beta^{6}$} & 
{\color{red!70!black}$\beta^{8}$} & {\color{blue!70!black}$\beta^{11}$} \\
$\vdots$ & $\vdots$ & $\vdots$ & $\vdots$ & $\vdots$ & $\vdots$ & $\vdots$\\
\bottomrule
\end{tabular}
\end{table}

As shown in Table~\ref{tab: ex_solution}, the row coefficient at 
each node $u\in[0,5]$ of the final code is
$
\alpha_{w,u}:=\lambda_{w_{(u)},u}=\beta^{2u+w_{(u)}},
$
which depends on $w$ only through its $u$-th coordinate in the 
$2$-ary expansion. This is precisely the same subsymbol-level 
alignment as achieved by the pre-alignment in 
Construction~\ref{cons: array_rI>r_F}: the cyclic shift of $\mathcal{C}'_{I_2}$ 
by $k_I=2$ positions produces exactly the coordinate  
$w_{(u+2)}$ at nodes $u\in[0,3]$, matching the structure 
of~\eqref{eq: pc_initial}. Hence, the final code is an 
$(n_F=6,k_F=4;2^6)_q$ MSR code by the same argument as 
Theorem~\ref{thm: msr}.
\end{ex}

The examples above illustrate the key ideas behind the row-matching 
technique. We now formalize this approach.

\subsubsection{The Row-Matching Technique}\label{sec: row-matching map}
Let $\mathcal{C}_{I}$ be the initial code, which is an
$(n_I,k_I;s^L)_q$ Hadamard-based MSR code whose row coefficients are given by
$
\alpha_{w,u}:=\lambda_{w_{(u)},u},\quad u\in[0,n_I-1],\,w\in[0,s^L-1].
$

\textbf{Step 1: Scale the evaluation points.}
For each $i\in[m]$, define a code $\mathcal{C}_{I_i}$ with row 
coefficients
$
\delta_i\alpha_{w,u}=\delta_i\lambda_{w_{(u)},u},\,
u\in[0,n_I-1],\,w\in[0,s^L-1],
$
where $\delta_i\in\mathbb{F}_q^*$ is a nonzero scalar such that 
the scaled sets
\begin{equation}\label{eq: scaled_Set}
\Lambda^{(i)}_{k_I}:=\{\delta_i\lambda_{v,u}:v\in[0,s-1],\,
u\in[0,k_I-1]\}
\end{equation}
are pairwise disjoint. Then, the parity-check equations of 
$\mathcal{C}_{I_i}$ are given by
\begin{equation*}\small
\begin{split}
\sum_{u=0}^{n_I-1}(\delta_i\lambda_{w_{(u)},u})^t c^{(i)}_{w,u}=\delta_i^t\sum_{u=0}^{n_I-1}\lambda_{w_{(u)},u}^t c^{(i)}_{w,u}=0,\\
\quad t\in[0,r_I-1],\,w\in[0,s^L-1],
\end{split}
\end{equation*}
which differ from those of $\mathcal{C}_I$ only by the common 
scaling factor $\delta_i^t$. Since multiplying all evaluation 
points by a common nonzero scalar does not change the solution 
set of the parity-check equations, all $\mathcal{C}_{I_i}$ 
represent the same MSR code $\mathcal{C}_I$. Furthermore, the 
disjointness condition on the first $k_I$ evaluation points 
ensures that the unchanged symbols from different initial codes 
do not collide, which is required for the MDS property of the 
final code.

\textbf{Step 2: Define the row-matching maps.}
For each $i\in[m]$, let $\pi_i:[0,\alpha-1]\to[0,\alpha-1]$ be 
a bijection. We set $\pi_1(w)=w$, leaving the first codeword 
unchanged, and for $i\geq 2$ define $\pi_i$ as a cyclic left-shift 
of the $s$-ary expansion of $w$ by $(i-1)k_I$ positions:
\begin{equation}\label{eq: row_matching mapping}
\begin{split}
\pi_i(w_{(0)},\cdots,&w_{(L-1)}):=\\
(w_{((i-1)k_I)},\cdots,&w_{(L-1)},w_{(0)},\cdots,w_{((i-1)k_I-1)}).
\end{split}
\end{equation}
Since each $\pi_i$ is induced by a coordinate permutation of the 
$s$-ary expansion of $w$, it is a bijection on $[0,\alpha-1]$. 
By~\eqref{eq: row_matching mapping}, the $u$-th coordinate of 
$\pi_i(w)$ satisfies $\pi_i(w)_{(u)}=w_{(u+(i-1)k_I)}$, so the 
parity-check equations of the $i$-th codeword at row $\pi_i(w)$ 
take the form
\vspace{-0.2cm}
\[\small
\sum_{u=0}^{n_I-1}\lambda_{w_{(u+(i-1)k_I)},\,u}^t\,
c^{(i)}_{\pi_i(w),u}=0,
\]
which is precisely~\eqref{eq: pc_initial} in 
Construction~\ref{cons: array_rI>r_F}. Hence the row-matching converts the same-code 
conversion into an instance of Construction~\ref{cons: array_rI>r_F}, 
and the MSR repair properties of both the initial codes and the 
final code follow directly from Theorem~\ref{thm: msr}.

\textbf{Step 3: Apply the conversion framework.}
For each $w\in[0,\alpha-1]$, the $m$ rows
\[\small
\big(c^{(i)}_{\pi_i(w),0},\cdots,c^{(i)}_{\pi_i(w),n_I-1}\big),
\quad i\in[m],
\]
are grouped together to form one scalar conversion instance, and 
the remaining conversion steps follow exactly those in 
Section~\ref{sec: framework}.
\subsubsection{Same-Code Convertible MSR Codes}

The row-matching technique applies to any Hadamard-based MSR code 
$\mathcal{C}_I$ with sufficiently large $L$ and field size $q$, 
yielding convertible MSR codes in which all initial codewords are 
drawn from the same code. We now instantiate this procedure using 
Construction~\ref{cons: array_rI>r_F} and 
Construction~1A. 
As we will show, the same-code constraint does not incur any loss: 
the resulting codes achieve optimal access cost and 
optimal conversion bandwidth in the corresponding parameter regimes.

Since all initial codewords are drawn from $\mathcal{C}_I$, the 
parameters satisfy $n_{I_i}=n_I$, $k_{I_i}=k_I$, and $r_{I_i}=r_I$ 
for all $i\in[m]$, and the sub-packetization parameter simplifies to 
$s=\mathrm{lcm}(d_I-k_I+1,\,d_F-k_F+1)$. We next specialize the above three-step procedure to the same-code setting.

\textbf{Step 1} (Scaling). We take $\mathcal{C}_I=\mathcal{C}_{I_1}$, thus
row coefficients are given by
\begin{equation}\label{eq: c_1_coefficient}
\alpha_{w,u}:=\lambda_{w_{(u)},u}=\beta^{su+w_{(u)}},\quad u\in[0,n_I-1].
\end{equation}

For each $i\in[m]$, set $\delta_i:=\beta^{(i-1)sk_I}$. The row coefficients of $\mathcal{C}_{I_i}$ 
are then
\[
\delta_i\lambda_{w_{(u)},u}=\beta^{s((i-1)k_I+u)+w_{(u)}},
\quad u\in[0,n_I-1],
\]
and all $\mathcal{C}_{I_i}$ represent the 
same MSR code $\mathcal{C}_I$.
By \eqref{eq: scaled_Set} and \eqref{eq: c_1_coefficient}, we can deduce that
the scaled sets
\[\small
\Lambda_{k_I}^{(i)}=\bigl\{\delta_i\beta^{su+v}:v\in[0,s-1],\,
u\in[0,k_I-1]\bigr\}
\]
are pairwise disjoint. 

\textbf{Step 2} (Row-matching). For each $i\in[m]$, apply the 
bijection $\pi_i$ defined in~\eqref{eq: row_matching mapping} to 
the rows of $\mathcal{C}_{I_i}$.

\textbf{Step 3} (Conversion). For each $w\in[0,\alpha-1]$, row 
$\pi_i(w)$ of $\mathcal{C}_{I_i}$ serves as the $i$-th input to 
the conversion procedure in Section~\ref{sec: framework}, with 
$\hat{H}_w$ given by~\eqref{eq: cons1_h} for 
Construction~\ref{cons: array_rI>r_F} and by~\eqref{eq: cons2_h} 
for Construction~1A. The parity-check 
equations of the resulting final code $\mathcal{C}_F$ are
\begin{equation}\label{eq: step3_final}\small
\begin{split}
\sum_{i=1}^m\sum_{u=0}^{k_I-1}\!&\left(\beta^{s((i-1)k_I+u)
+\pi_i(w)_{(u)}}\right)^{\!t}\!c_{w,(i-1)k_I+u}
+\\&\sum_{u=0}^{r_F-1}\!\left(\beta^{s(mk_I+u)+w_{(mk_I+u)}}
\right)^{\!t}\!c_{w,mk_I+u}=0,
\end{split}
\end{equation}
for $t\in[0,r_F-1]$ and $w\in[0,\alpha-1]$. By 
\eqref{eq: row_matching mapping}, the row-matching bijection 
satisfies
\[
\pi_i(w)_{(u)}=w_{((i-1)k_I+u)},\quad i\in[m],\,u\in[0,n_I-1].
\]
Substituting into~\eqref{eq: step3_final} yields
\[\small
\sum_{u=0}^{n_F-1}\left(\beta^{su+w_{(u)}}\right)^t c_{w,u}=0,
\quad t\in[0,r_F-1],\;w\in[0,\alpha-1],
\]
which is precisely the Hadamard MSR structure 
in~\eqref{eq: had_row_points_simple}. Hence $\mathcal{C}_F$ is 
an MSR code.

Moreover, since all initial codes share the same 
parameters, the conversion bandwidth lower bound in 
Theorem~\ref{bound on bandwidth} applies. 
Construction~\ref{cons: array_rI>r_F} achieves this bound for 
$r_F\leq\min\{k_I,r_I\}$, and Construction~1A
achieves it for $r_F>k_I$. We summarize as follows.

\begin{thm}\label{thm: same_code_access_bw}
Applying the row-matching technique to Construction~\ref{cons: array_rI>r_F} in the regime $r_F\leq \min\{k_I,r_I\}$ and to Construction~1A in the regime $r_F>k_I$ yields $(m,1)_q$ convertible MSR codes with optimal access cost and optimal conversion bandwidth.
\end{thm}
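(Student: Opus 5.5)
The plan has three parts. First, the MSR property: by the row-matching discussion, the pair $(\mathcal{C}_{I_i},\pi_i)$ is an instance of Construction~\ref{cons: array_rI>r_F} (resp. Construction~1A), so MSR-ness of the initial and final codes follows from Theorem~\ref{thm: msr}. Second, optimal access, which is inherited from Lemma~\ref{lem: mds_condition}. Third, the bandwidth count, which must be checked against Theorem~\ref{bound on bandwidth}.

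\textbf{MSR property.} The scaling by $\delta_i$ leaves the solution space unchanged, so every initial codeword belongs to the common code $\mathcal{C}_I$. That code is Hadamard-based with $s_I=d_I-k_I+1\mid s$, so it is MSR by the sub-group argument in the proof of Theorem~\ref{thm: msr}. The derivation right after \eqref{eq: step3_final} shows that the final parity-check equations have row coefficients $\beta^{su+w_{(u)}}$. For these to form an MDS row code, the evaluation points must be pairwise distinct. I will check this on the exponents: $su+w_{(u)}$ with $u\in[0,n_F-1]$ and $w_{(u)}\in[0,s-1]$ are all distinct and lie in $[0,sL-1]$, and $q>sL$ makes $\beta^{e}$ injective on this range. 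This needs $L\ge n_F=mk_I+r_F$, which holds in both regimes by the choice of $L$. The coordinate-wise periodicity with $s_F\mid s$ then gives MSR repair for $\mathcal{C}_F$.

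\textbf{Access and read bandwidth.} Optimal access follows from Lemma~\ref{lem: mds_condition} applied row by row. I also need to confirm that row-matching does not add symbol accesses: $\pi_i$ only permutes the rows inside each node, so the set of nodes touched is exactly the set used by $\sigma_1$ or $\sigma_2$.
\begin{itemize}
\item In the regime $r_F\le\min\{k_I,r_I\}$, $\sigma_1$ reads the $r_F$ symbols $\mathbf{c}_{i,k_I},\dots,\mathbf{c}_{i,k_I+r_F-1}$ from each of the $m$ codewords, each in full. This gives $\gamma_R=m\alpha r_F=m\alpha\min\{k_I,r_F\}$.
\item In the regime $r_F>k_I$, $\sigma_2$ reads the $k_I$ systematic symbols of each codeword. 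This gives $\gamma_R=m\alpha k_I=m\alpha\min\{k_I,r_F\}$.
\end{itemize}
Both values match the first case of Theorem~\ref{bound on bandwidth}, since $r_I\ge r_F$ or $k_I\le r_F$ holds in the respective regime.

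\textbf{Write bandwidth.} Exactly $r_F$ new symbols of $\alpha$ subsymbols each are written, so $\gamma_W=r_F\alpha$, which meets the bound.

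The main obstacle is the bandwidth argument. Full-symbol reads might seem wasteful, but the lower bound in this regime is itself $\alpha$ per accessed symbol, so no sub-symbol savings are required. What must be checked carefully is that the row remapping $\pi_i$ cannot force any extra reads. Each written row $w$ uses only row $\pi_i(w)$ of the already-read nodes, and $\pi_i$ is a bijection, so every read subsymbol is used exactly once and none is read from an unread node.
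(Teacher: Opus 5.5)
Your proposal is correct and follows essentially the same route as the paper's proof. The MSR property of $\mathcal{C}_F$ comes from the Step~3 derivation, and the read/write counts ($mr_F\alpha$ or $mk_I\alpha$ read, $r_F\alpha$ written) are matched against Theorems~\ref{bound on access} and~\ref{bound on bandwidth}; your extra checks (distinct exponents $su+w_{(u)}$ via $L\ge n_F$ and $q>sL$, and that $\pi_i$ adds no node accesses) only make explicit what the paper leaves implicit.
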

\begin{IEEEproof}
The MSR property of $\mathcal{C}_F$ follows from the derivation 
in Step~3. For Construction~\ref{cons: array_rI>r_F} in the regime 
$r_F\leq\min\{k_I,r_I\}$, the access cost equals $(m+1)r_F$, 
meeting the lower bound in Theorem~\ref{bound on access}, and the 
conversion bandwidth equals $(m+1)r_F\alpha$, 
meeting the lower bound in Theorem~\ref{bound on bandwidth}. The 
same holds for Construction~1A in the regime 
$r_F>k_I$, where the access cost equals $r_F+mk_I$ and the 
conversion bandwidth equals $r_F\alpha+mk_I\alpha$.
\end{IEEEproof}

\begin{remark}
While $\sigma_1$ and $\sigma_2$ achieve access optimality in the regimes 
$r_F\leq\min\{k_I,r_I\}$ and $r_F>\min\{k_I,r_I\}$, respectively, 
they do not necessarily achieve bandwidth optimality in all regimes.

In particular, in the regime $r_I<r_F<k_I$, the 
access-optimal construction uses $\sigma_2$, which has conversion bandwidth 
$r_F\alpha+mk_I\alpha$. This exceeds the lower bound
\begin{equation}\label{eq: acc+ban}
\gamma\geq r_F\alpha+m\alpha(r_I+k_I(1-\frac{r_I}{r_F})).
\end{equation}
This gap arises because $\sigma_2$ computes the $r_F$ written symbols 
by downloading all $k_I$ symbols from each initial codeword.

\end{remark}

To achieve the bound~\eqref{eq: acc+ban}, we instead use $\sigma_1$ augmented with the multi-node 
repair scheme of Section~\ref{sec: hadamard}(ii), as detailed in 
Stages 1 and 2 of Construction~\ref{cons: array_bw} below. This 
reduces the conversion bandwidth at the cost of reading additional 
symbols beyond the $mk_I$ nodes. In fact, achieving 
the bandwidth lower bound~\eqref{eq: acc+ban} in this regime 
inherently requires contacting more than $mk_I$ nodes, which 
exceeds the access lower bound. Hence access-optimality and 
bandwidth-optimality are fundamentally incompatible in this regime, 
and we focus on bandwidth-optimal constructions. In particular, 
we set $d_F=k_F+r_I$.

Since this construction uses the multi-node repair scheme of 
Section~\ref{sec: hadamard}(ii), a space-sharing extension of the 
subpacketization is required. We index each row by a pair $(w,z)$ 
with $w\in[0,s^L-1]$ and $z\in[0,r_F-1]$, so that $\alpha=r_Fs^L$. 
The index $w$ carries the periodic structure required for MSR repair 
and row-matching, while $z$ indexes the space-sharing instances. 
The row-matching maps $\pi_i$ act only on $w$, leaving $z$ unchanged.

\begin{cons}\label{cons: array_bw}
Consider the regime $r_I<r_F<k_I$. Let $\mathcal{C}_I$ be an $(n_I,k_I;\alpha)_q$ initial code. Define $k_I<d_I< n_I$
 and $d_F=k_F+r_I$ as the number of helper nodes of $\mathcal{C}_I$ and $\mathcal{C}_F$, respectively.
\begin{itemize}
    \item {Parameters.} Define
    \[\small s:=\mathrm{lcm}(d_I-k_I+1,\,d_F-k_F+1)=\mathrm{lcm}(d_I-k_I+1,\,r_I+1)\]
    and $L:=mk_I+r_F=n_F,\, \alpha:=r_Fs^{L}.$ Assume that $q>Ls$.

    \item {Initial code.} Let $\beta$ be a primitive element of 
    $\mathbb{F}_q$. Define
    \[\lambda_{v,u}:=\beta^{su+v},\quad v\in[0,s-1],\,u\in[0,n_I-1].\]

    Node $u$ of the $i$-th codeword of $\mathcal{C}_I$ stores
    \begin{equation*}\small
    \begin{split}
    \mathbf{c}_{i,u}=\bigl(c^{(i)}_{w,z,u}:\,w\in[0,s^L-1],\,
    z\in[0,&r_F-1]\bigr)^\top,\quad\\ &u\in[0,n_I-1].
    \end{split}
    \end{equation*}
    with the same parity-check equations \vspace{-0.2cm}
    \begin{equation}\label{eq: pc_initial_bw}
    \begin{split}
    \sum_{u=0}^{n_I-1}\left(\lambda_{w_{(u)},\,u}\right)^t
    &c^{(i)}_{w,z,u}=0,\quad \\t\in[0,r_I-1],\,&w\in[0,s^L-1],\,z\in[0,r_F-1].
    \end{split}
    \end{equation}
    
    \item {Conversion.} 
    
    Apply the row-matching technique in 
    Section~\ref{sec: row-matching map} to $\mathcal{C}_I$, 
    where $\pi_i$ acts on the index $w$ only. For each fixed 
    $(w,z)$, the unchanged symbols from row $(\pi_i(w),z)$ of 
    the $i$-th codeword serve as input to the conversion. 
    
 For any $z\in[0,r_F-1]$, the parity-check matrix $\hat{H}_{w,z}$ corresponding written symbols is
        given by
        \begin{equation}\label{eq: cons3_h}
        \begin{split}
        \hat{H}_{w,z}=\mathrm{Vand}_{r_F}\!\left(\lambda_{w_{(u)},\,u}:
        u\in[mk_I,\,mk_I+r_F-1]\right), \\ w\in[0,s^L-1].
        \end{split}
        \end{equation}

    The 
    written symbols
    \begin{equation*}\small
    \begin{split}
    \hat{\mathbf{c}}_u=\bigl(\hat{c}_{w,z,u}:w\in[0,s^L-1],\,
    z\in[0,\,r_F&-1]\bigr)^\top,\quad\\& u\in[0,r_F-1],
    \end{split}
    \end{equation*}
    are computed in two stages:
  \begin{enumerate}
      \item[Stage 1.] The first $r_I$ written symbols 
$\hat{\mathbf{c}}_0,\cdots,\hat{\mathbf{c}}_{r_I-1}$ are generated 
via $\sigma_1$. After this stage, the $mk_I$ unchanged symbols 
together with $\hat{\mathbf{c}}_0,\cdots,\hat{\mathbf{c}}_{r_I-1}$ 
form $k_F+r_I=d_F$ known nodes of the final code $\mathcal{C}_F$, 
which serve as helper nodes for Stage 2.

\item[Stage 2.] The remaining $r_F-r_I$ written symbols 
$\hat{\mathbf{c}}_{r_I},\cdots,\hat{\mathbf{c}}_{r_F-1}$ are 
treated as $r_F-r_I$ simultaneously failed nodes of the final 
code. As we will show in Theorem~\ref{thm: array_bw}, the final 
code $\mathcal{C}_F$ has the Hadamard MSR structure 
in~\eqref{eq: had_row_points_simple} with parameter $s_F=r_I+1$. 
The multi-node repair scheme of Section~\ref{sec: hadamard}(ii) 
is therefore applicable with $h=r_F-r_I$ failed nodes and 
$d_F=k_F+r_I$ helper nodes, yielding the $r_F-r_I$ unknown 
written symbols.
   \end{enumerate}
\end{itemize}
\end{cons}

\begin{thm}\label{thm: array_bw}
Construction~\ref{cons: array_bw} yields an $(m,1)_q$ convertible 
MSR code with optimal conversion bandwidth. The initial code $\mathcal{C}_I$ is an  $(n_I,k_I;\alpha)_q$ MSR code, and the final code $\mathcal{C}_F$ is an $(n_F,k_F=mk_I;\alpha)_q$ MSR code.
\end{thm}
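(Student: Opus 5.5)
The proof establishes three facts in turn: (a) $\mathcal{C}_I$ is MSR, (b) $\mathcal{C}_F$ is MSR and the two-stage conversion is well defined, and (c) the conversion bandwidth equals the bound of Theorem~\ref{bound on bandwidth} in the regime $r_I<r_F<k_I$.

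\textbf{Step (a).} For each fixed $z$, the equations \eqref{eq: pc_initial_bw} define a copy of the Hadamard code of Section~\ref{sec: hadamard}, with evaluation points $\lambda_{w_{(u)},u}=\beta^{su+w_{(u)}}$. These points are distinct because $q>Ls$. Stacking the $r_F$ space-sharing copies preserves both the MDS property and the per-row periodicity. The sub-group argument from the proof of Theorem~\ref{thm: msr} then applies: partition $[0,s-1]$ into blocks of size $s_I=d_I-k_I+1$, which divides $s$. This gives per-helper bandwidth $\alpha/s_I$ for each $z$-layer, so $\mathcal{C}_I$ is MSR.

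\textbf{Step (b).} After row-matching, I repeat the Step~3 computation of the same-code subsection for each fixed $z$. The $mk_I$ unchanged nodes then carry coefficients $\beta^{su+w_{(u)}}$ for $u\in[0,mk_I-1]$. By \eqref{eq: cons3_h}, the written nodes carry the same form for $u\in[mk_I,n_F-1]$. Hence $\mathcal{C}_F$, defined by $\sum_u(\beta^{su+w_{(u)}})^t c_{w,z,u}=0$ for $t\in[0,r_F-1]$, is a Vandermonde-based MDS array code with Hadamard periodicity. Since $s_F=r_I+1$ divides $s$, it is MSR with $d_F=k_F+r_I$ by the argument of Theorem~\ref{thm: msr}.

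The main obstacle is consistency between the two stages. Stage~1 uses $\sigma_1$, which works because $r_I\le\min\{k_I,r_I\}$. It produces $\hat{\mathbf c}_0,\dots,\hat{\mathbf c}_{r_I-1}$ so that the resulting length-$(mk_I+r_I)$ word satisfies the first $r_I$ parity checks. I must show this word is exactly the restriction to nodes $[0,mk_I+r_I-1]$ of the unique $\mathcal{C}_F$-codeword determined by the $mk_I$ unchanged nodes. The approach is to note that the punctured code $\mathcal{C}_F|_{[0,mk_I+r_I-1]}$, after the implicit GRS rescaling convention, is the MDS code with parity-check matrix $\mathrm{Vand}_{r_I}$ on those points. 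That puncture is exactly what $\sigma_1$ targets with $\hat H_w=\mathrm{Vand}_{r_I}$ on nodes $mk_I,\dots,mk_I+r_I-1$. Uniqueness of MDS completion then identifies the two words. Stage~2 treats the remaining $h=r_F-r_I$ nodes as erasures with $d_F$ helpers, and applies the multi-node scheme of Section~\ref{sec: hadamard}(ii) with $s_F=r_I+1$ and $d_F-k_F+h=r_F$ space-sharing layers. This matches the $z$-range of size $r_F$ when $s_F=s$. When $s>s_F$, I would combine the sub-block grouping with the $z$-layer grouping, using the same partition idea as in (a).

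\textbf{Step (c).} Stage~1 reads $r_I\alpha$ subsymbols from each initial codeword. Stage~2 downloads $b=h\alpha/(d_F-k_F+h)=\alpha(r_F-r_I)/r_F$ from each of the $d_F$ helpers. The $mr_I$ helper contributions from written nodes are computed in memory, so only the $mk_I$ unchanged nodes contribute read bandwidth: $mk_I\alpha(1-r_I/r_F)$. Reads therefore total $m\alpha(r_I+k_I(1-r_I/r_F))$, and writes total $r_F\alpha$. Both meet Theorem~\ref{bound on bandwidth}.

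I must also check that Stage~1's helper symbols need not be downloaded from storage, and that the $\sigma_1$ reads of nodes $k_I,\dots,k_I+r_I-1$ do not double-count. The latter holds because those nodes are distinct from the unchanged nodes.
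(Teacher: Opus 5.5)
Your proposal is correct and follows the paper's proof essentially step for step: the per-$z$ sub-group argument for $\mathcal{C}_I$; the row-matching coefficient computation, with Stage~1 via $\sigma_1$ and Stage~2 via multi-node repair, for $\mathcal{C}_F$; and the same count $\gamma_R=mr_I\alpha+mk_I\alpha(1-r_I/r_F)$ (there are $r_I$, not $mr_I$, written helper nodes, but your total is right). Your explicit Stage~1/Stage~2 consistency check is something the paper leaves implicit in its rescaling convention, but the identification holds only up to the per-subsymbol GRS factors, which also multiply the unchanged positions, so $\mathcal{C}_F$ is really a diagonally rescaled Hadamard code---still MDS and MSR, since each helper can undo its known scaling locally before forming its group sums.
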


\begin{IEEEproof}
We verify that Construction~\ref{cons: array_bw} yields a 
convertible MSR code with optimal conversion bandwidth by 
establishing the MSR property of both the initial code and the 
final code, and then proving bandwidth optimality.

\textbf{Initial code is MSR.} 
Fix any $z\in[0,r_F-1]$ and treat 
$\{c^{(i)}_{w,z,u}:w\in[0,s^L-1]\}$ as the subsymbols stored in 
node $u$. The parity-check equations~\eqref{eq: pc_initial_bw} 
have exactly the Hadamard structure in~\eqref{eq: had_pc_row} 
and~\eqref{eq: had_row_points_simple}, with subpacketization $s^L$ 
and parameter $s_I:=d_I-k_I+1$ dividing $s$ by definition. For 
each fixed $z$, the repair argument is analogous to that in 
Theorem~\ref{thm: msr}: for each failed node $j^*\in[0,n_I-1]$ 
and each interval $a\in[0,s/s_I-1]$, define the sub-group
\[\small
\mathcal{G}_a(w,j^*):=\bigl\{R_w(j^*,\,as_I+b):b\in[0,s_I-1]\bigr\},
\]
and apply the repair procedure of Section~\ref{sec: hadamard}(i) 
within each sub-group, 
giving per-helper download $s^L/s_I$ subsymbols. Running over all $r_F$ instances indexed by 
$z\in[0,r_F-1]$, the total repair bandwidth of each helper is
\[
b=r_F\cdot\frac{s^L}{s_I}=\frac{\alpha}{s_I}=
\frac{\alpha}{d_I-k_I+1},
\]
meeting the cut-set bound in~\eqref{eq: beta}. Hence 
$\mathcal{C}_I$ is an $(n_I,k_I;\alpha)_q$ MSR code.

\textbf{Final code is MSR.}
 For $w\in[0,s^L-1]$ and $z\in[0,r_F-1]$, we index the symbols 
of the final codeword as
\begin{equation*}
\begin{split}
&c_{w,z,\,j+\sum_{l=1}^{i-1}k_I}:=c^{(i)}_{\pi_i(w),z,j},
\quad i\in[m],\,j\in[0,k_I-1],\\
&c_{w,z,\,j+mk_I}:=\hat{c}_{w,z,j},\quad j\in[0,r_F-1],
\end{split}
\end{equation*}
where the first $mk_I$ positions correspond to the unchanged 
symbols from the $m$ initial codewords after row-matching, and 
the last $r_F$ positions correspond to the written symbols.

The written symbols are computed in two stages.
\begin{enumerate}
\item[1)] In the first stage, 
$\sigma_1$ is applied to generate the first $r_I$ written symbols 
$\hat{\mathbf{c}}_0,\cdots,\hat{\mathbf{c}}_{r_I-1}$ from the 
reading $r_I$ symbols of the initial codewords, i.e., 
$\{\mathbf{c}_{i,j}: j\in[k_I,k_I+r_I-1] \},\, i\in[m]$
with parity-check matrix 
\begin{equation*}
\begin{split}
        \hat{H}'_{w,z}=\mathrm{Vand}_{r_I}\!\left(\lambda_{w_{(u)},\,u}: u\in[mk_I,\,mk_I+r_I-1]\right),\\
        w\in[0,s^L-1],\quad z\in[0,r_F-1],
        \end{split}
        \end{equation*}
        which are $r_I\times r_I$ invertible submatrices of \eqref {eq: cons3_h}.
        
\item[2)] In the second stage, the remaining $r_F-r_I$ written 
symbols $\hat{\mathbf{c}}_{r_I},\cdots,\hat{\mathbf{c}}_{r_F-1}$ 
are treated as $h=r_F-r_I$ failed nodes, and are recovered from 
the $d_F=k_F+r_I$ helper nodes consisting of the unchanged symbols 
and $\hat{\mathbf{c}}_0,\cdots,\hat{\mathbf{c}}_{r_I-1}$ via the 
multi-node repair procedure of Section~\ref{sec: hadamard}(ii). This is valid because the parity-check equations of $\mathcal{C}_F$ 
satisfy the coordinate-wise periodicity required for MSR repair. 
Specifically, by the row-matching bijection, 
the $u$-th coordinate of $\pi_i(w)$ satisfies
\[
\pi_i(w)_{(u)}=w_{(u+(i-1)k_I)},\quad u\in[0,n_I-1],\,i\in[m],
\]
so the coefficient at node $u+(i-1)k_I$ of the final code, 
contributed by the $i$-th initial codeword, is
\begin{equation*}
\begin{split}
\delta_i\lambda_{\pi_i(w)_{(u)},\,u}&=\delta_i\lambda_{w_{(u+(i-1)k_I)},\,u}\\
&=\beta^{s(u+(i-1)k_I)+w_{(u+(i-1)k_I)}}.
\end{split}
\end{equation*}
Similarly, by~\eqref{eq: cons3_h}, the written symbols at node $mk_I+j$ for $j\in[0,r_F-1]$ 
have coefficient $\beta^{s(mk_I+j)+w_{(mk_I+j)}}$. Therefore, 
the parity-check equations of $\mathcal{C}_F$ take the form \vspace{-0.2cm}
\begin{equation*}\small
\begin{split}
\sum_{u=0}^{n_F-1}\left(\beta^{su+w_{(u)}}\right)^t c_{w,z,u}=0,&
\quad t\in[0,r_F-1],\\ w\in[0,&s^L-1],\;z\in[0,r_F-1],
\end{split}
\end{equation*}
where each coefficient $\beta^{su+w_{(u)}}$ depends on $w$ only 
through its $u$-th coordinate in the $s$-ary expansion. This is 
precisely the Hadamard MSR structure 
in~\eqref{eq: had_pc_row} and~\eqref{eq: had_row_points_simple}, with parameter $\alpha=(d_F-k_F+h)s^{n_F}$ and $s_F:=d_F-k_F+1=r_I+1$ dividing $s$ by definition. 
The multi-node repair scheme of Section~\ref{sec: hadamard}(ii) 
therefore applies with $h=r_F-r_I$ failed nodes and $d_F$ helper 
nodes.

\end{enumerate}

Once all written symbols are obtained, $\mathcal{C}_F$ supports 
both single-node and multiple-node repair. For single-node repair, 
fix any $z\in[0,r_F-1]$ and treat $\{c_{w,z,u}:w\in[0,s^L-1]\}$ 
as the subsymbols of node $u$. The repair argument follows the 
same argument as for $\mathcal{C}_I$ above with $d_F$ helper nodes, giving repair bandwidth 
$b=\alpha/s_F$. For multiple-node repair of $h=r_F-r_I$ 
simultaneously failed nodes, the repair procedure follows 
Section~\ref{sec: hadamard}(ii), with $d_F$ helper nodes. Hence 
$\mathcal{C}_F$ is an $(n_F,k_F;\alpha)_q$ MSR code.

\textbf{Bandwidth optimality.} 
The conversion bandwidth consists of 
two parts. The first stage reads all parity symbols from the $m$ 
initial codewords via $\sigma_1$, contributing $mr_I\alpha$ 
subsymbols. The second stage downloads $(r_F-r_I)k_Fs^L$ subsymbols 
from the unchanged nodes via the multi-node repair procedure. The 
total conversion bandwidth is therefore
\[\small
\gamma_R=mr_I\alpha+\frac{(r_F-r_I)k_F\alpha}{r_F}
=mk_I\alpha-mr_I\alpha(\frac{k_I}{r_F}-1),
\]
which meets the lower bound~\eqref{eq: acc+ban} with 
equality. Hence Construction~\ref{cons: array_bw} is bandwidth-optimal.
\end{IEEEproof}

\section{Conclusion}\label{sec: conclusion}

In this paper, we studied convertible MSR codes in the merge
regime and proposed explicit MSR-to-MSR conversion schemes
that jointly achieve efficient conversion and optimal repair.
Our constructions attain optimal access cost, optimal conversion
bandwidth, and the MSR repair property.
These results provide a unified approach to combining conversion
efficiency and repair optimality in distributed storage systems.

\bibliographystyle{IEEEtranS}
\bibliography{YBib}

\end{document}